\newcommand{\Z}{\mathbb{Z}}
\newcommand{\N}{\mathbb{N}}
\newcommand{\ID}{\mathrm{id}}
\newcommand{\clone}[1]{\left\lceil#1\right\rceil}
\newcommand{\gen}[1]{\left\langle#1\right\rangle}
\newcommand{\id}{id}
\newcommand{\controlled}[2]{f_{#1,#2}}
\newcommand{\cp}[2]{CP(#1,#2)}
\newcommand{\Cone}{B(A)}
\newcommand{\Ctwo}{Cons(A)}
\newcommand{\Cthree}{Even(A)}
\newcommand{\Cfour}{ECons(A)}
\newcommand{\Ctwophi}{Cons^\phi(A)}
\newcommand{\Cfourphi}{ECons^\phi(A)}
\newcommand{\W}{\cal W}
\newcommand{\Sym}{\mathrm{Sym}}
\newcommand{\Alt}{\mathrm{Alt}}
\newcommand{\sym}{\mathrm{Sym}}
\newcommand{\alt}{\mathrm{Alt}}
\newcommand{\what}{revital}
\newcommand\xqed[1]{%
  \leavevmode\unskip\penalty9999 \hbox{}\nobreak\hfill
  \quad\hbox{#1}}
\newcommand{\new}{}%\color{red}}
\newcommand{\UTU}{
 Department of Mathematics and Statistics, University of Turku, Finland.
% {\texttt{jkari@utu.fi}}
 }
\newcommand{\LINZ}{
 Institute for Algebra, Johannes Kepler University Linz, Austria
and Time's Up Research, Linz, Austria.
% {\texttt{tim@timesup.org}}
 }
\newtheorem{theorem}{Theorem}
\newtheorem{lemma}[theorem]{Lemma}
\newtheorem{corollary}[theorem]{Corollary}
\newdefinition{definition}{Definition}
\newproof{proof}{Proof}
\journal{Theoretical Computer Science}
\begin{document}

\begin{frontmatter}

\title{Finite generating sets for reversible gate sets under general conservation laws\tnoteref{t1,t2}}

\tnotetext[t1]{The authors would like to acknowledge the contribution of the COST Action IC1405.
This work was partially funded by SFB Project F5004 of the Austrian Science
Foundation, FWF, by the Academy of Finland grant 296018, and by FONDECYT research grant 3150552.}
\tnotetext[t2]{A preliminary version of this work was presented at the 8th International Conference on Reversible Computation, RC 2016~\cite{rc2016}.}

%% use optional labels to link authors explicitly to addresses:
%% \author[label1,label2]{}
%% \address[label1]{}
%% \address[label2]{}

\author[linz]{Tim Boykett}
\author[utu]{Jarkko Kari}
\author[utu]{Ville Salo}

\address[linz]{\LINZ}
\address[utu]{\UTU}

\begin{abstract}
It is well-known that the Toffoli gate and the negation gate together yield a universal gate set, in the sense that every permutation
of $\{0,1\}^n$ can be implemented as a composition of these gates. Since every bit operation that does not use all of the bits performs
an even permutation, we need to use at least one auxiliary bit to perform every permutation, and it is known that one bit is indeed enough. Without auxiliary bits, all even permutations can be implemented. We generalize these results to non-binary logic:
For any finite set $A$, a finite gate set can generate all even permutations of  $A^n$ for all $n$, without any auxiliary symbols.
This directly implies the previously published result that
a finite gate set can generate all permutations of $A^n$ when the cardinality of $A$ is odd, and that
one auxiliary symbol is necessary and sufficient to obtain all permutations when the cardinality of $A$ is even.
We also consider the conservative case,
that is, those permutations of $A^n$ that preserve the weight of the input word.
 The weight is the vector that records how many times
each symbol occurs in the word {\new or, more generally, the image of the word under a fixed monoid homomorphism from $A^*$
to a commutative monoid.}
 It turns out that no finite conservative gate set can, for all $n$,
implement all conservative even permutations
of  $A^n$ without auxiliary bits. But we provide a finite gate set that can implement all those conservative permutations
that are even within each weight class of $A^n$.
\end{abstract}

\begin{keyword}
reversible gates,
reversible circuits,
universal gates,
conservative gates,
reversible clones
\end{keyword}

\end{frontmatter}

\section{Introduction}

The study of reversible and conservative binary gates was pioneered in the 1970s and 1980s by
Toffoli and Fredkin \cite{FrTo82,toff80}.
Recently, Aaronson, Greier and Schaeffer \cite{aaronsonetal15} described all binary gate sets closed under
the use of auxiliary bits, as a prelude to their eventual goal of classifying these gate sets in the quantum case. It has been noted that ternary gates have similar, yet distinct properties \cite{yangetal05}.

In this article, we consider the problem of finitely-generatedness of various families of reversible logic gates without using auxiliary bits.
In the case of a binary alphabet, it is known that the whole set of {\new reversible} gates is not finitely generated in this strong sense, but the family of gates that perform an even permutation of $\{0,1\}^n$ is \cite{aaronsonetal15,musset97,xu15}. In \cite{yangetal05}, it is shown that for the ternary alphabet, the whole set of reversible gates is finitely generated.
In  \cite{lafont93} the result is announced for all odd alphabets, with a proof attributed to personal communication, which has recently been published as \cite{selinger16}. Another proof of this fact can be found in~\cite{boykett15}.
In this paper, we look at gate sets with arbitrary finite %arity
alphabets, and prove the natural generalization: the whole set of {\new reversible}  gates is finitely generated if and only if the alphabet is odd, and in the case of an even alphabet, the even permutations are finitely generated.
%and find that there are a number of interesting and perhaps surprising results.

In \cite{xu15}, it is proved that in the binary case the conservative gates, reversible gates that preserve the numbers of symbols in the input (that is, its weight), are not finitely generated, even with the use of `borrowed bits', bits that may have any initial value but must return to their original value in the end. On the other hand, it is shown that with bits whose initial value is known (and suitably chosen), all permutations can be performed. We prove for all alphabets that the gates that perform an even permutation in every weight class are finitely generated, but the whole class of {\new conservative} permutations is far from being finitely generated (which implies in particular the result of \cite{xu15}).

We also consider more general 
conservation laws for gates. Assign to each letter of the alphabet as its weight an element of an arbitrary
commutative monoid. We say that a reversible gate conserves this assignment if the sum in the monoid of the weights of the
input symbols always equals the sum of the weights of the output symbols. This concept generalizes both the conservative gates and the unrestricted reversible gates. We prove that the generalized conservative gates that perform an even permutation in each weight class
are finitely generated. In contrast, the whole conservative class without the evenness requirement
is not finitely generated, provided there are sufficiently many non-trivial weight classes available.
This general result implies the special cases above.

Our methods are rather general, and the proofs in all cases follow the same structure. The negative aspect of these methods is that our universal gates are not the usual ones, and for example in the conservative case, one needs a bit of work (or computer time) to construct our universal gate family from the Fredkin gate.

% removed clone theory citations: malcev,pk79,
We start by introducing our terminology, taking advantage of the
concepts of clone theory \cite{szendrei} applied to bijections
as developed in \cite{boykett15}, leading to what we call \emph{reversible clones} or \emph{revclones}, and \emph{reversible iterative algebras} or
\emph{\what{}s}.
We note in passing that one can also use category-theoretic terminology to discuss the same concepts, and this is the approach taken in \cite{lafont93,musset97}. In this terminology, what we call revitals are strict symmetric monoidal groupoids in the category where objects are sets of the form $A^n$ and the horizontal composition rule is given by Cartesian product. A formal difference is that unlike morphism composition in a category, our composition operation is total.
%We note in passing that these concepts have been independently developed in category theory with terminology such as strict symmetric monoidal groupoid,
%as used in e.g.\ \cite{lafont93,musset97}.
%We will not attempt translations into category theory.

We generalize the idea of the Toffoli gate and Fredkin gate to what we call \emph{controlled permutations} and prove a general induction lemma showing that if we can add a single new control wire to a controlled permutation, we can add any number of control wires. %will be an important tool for developing the results in the rest of the paper.
We then show two combinatorial results about permutation groups that allow us
to simplify arguments about \what{}s.
This allows us to describe generating sets for various revclones and \what{}s of interest,
with the indication that these results will be useful for more general \what{} analysis,
as undertaken for instance in \cite{aaronsonetal15}.
While theoretical considerations show that finite generating sets do not
exist in some cases, in other cases explicit computational searches are able to provide
small generating sets.

\section{Background}
\label{sec:two}

Let $A$ be a finite set. We write $S_A$ or $\Sym(A)$ for the group of permutations or bijections of
$A$, $S_n$ for $Sym(\{1,\dots,n\})$ and
$\Alt(A)$ for the group of even permutations of $A$, $A_n=Alt(\{1,\dots,n\})$.
Let $B_n(A) = \{f:A^n\rightarrow A^n \;\vert\; f \mbox{ a bijection}\}=\Sym(A^n)$
be the group of $n$-ary bijections on $A^n$, and let
$B(A) = \cup_{n\in \N} B_n(A)$ be the collection of all bijections on powers of $A$.
We call them \emph{gates}.
{\new For $f\in B_n(A)$, we denote by $f_i:A^n\rightarrow A$ the $i$'th component of gate $f$, so that
$f(x_1,\dots ,x_n)= (f_1(x_1,\dots ,x_n),\dots,f_n(x_1,\dots ,x_n))$.}

We denote by
$\gen{X}$ the group generated by $X\subseteq B_n(A)$, a subgroup of $B_n(A)$.
{\new As $B_n(A)$ is a finite group,
$\gen{X}$ is also the monoid generated by $X$, that is, it consists of all compositions of functions in $X$.}
All our compositions are from right to left, so that $f\circ g: x\mapsto f(g(x))$. For consistency,
elements of permutation groups are then multiplied from right to left as well.

Each $\alpha \in S_n$ defines a \emph{wire permutation} $\pi_\alpha \in B_n(A)$
that permutes the coordinates of its input according to $\alpha$:
$$
\pi_\alpha(x_1,\dots,x_n) = (x_{\alpha^{-1}(1)},\dots, x_{\alpha^{-1}(n)}).
$$
The wire permutation $\id_n=\pi_{()}$ corresponding to the identity permutation $()\in S_n$ is the $n$-ary identity map.
Conjugating $f\in B_n(A)$ with a wire permutation $\pi_\alpha\in B_n(A)$ gives
$\pi_\alpha \circ f\circ \pi_\alpha^{-1}$, which we call a
\emph{rewiring} of $f$.
Rewirings of $f$ correspond to applying $f$ on arbitrarily ordered input wires,
{\new that is, the rewiring $g=\pi_\alpha \circ f\circ \pi_\alpha^{-1}$ satisfies
$(g_{\alpha(1)}(x),\dots , g_{\alpha(n)}(x)) = f(x_{\alpha(1)},\dots ,x_{\alpha(n)})$ for all $x=(x_1,\dots, x_n)$.
}

Any $f\in B_\ell(A)$ can be extended to $A^n$ for $n>\ell$ by applying it on any
selected $\ell$ coordinates while leaving
the other $n-\ell$ coordinates unchanged. Using the
clone theory derived  terminology in \cite{boykett15} we first define, for any $f\in B_n(A)$ and $g\in B_m(A)$, the
parallel application $f \oplus g \in B_{n+m}(A)$ by
\begin{align*}
(f \oplus g)(x_1,\dots x_{n+m}) = (&f_1(x_1,\dots,x_n), \dots, f_n(x_1,\dots,x_n), \\
& g_1(x_{n+1},\dots,x_{n+m}), \dots,g_m(x_{n+1},\dots,x_{n+m})).
\end{align*}
Then the \emph{extensions} of $f\in B_\ell(A)$ on $A^n$ are the rewirings of $f\oplus \id_{n-\ell}$.

Let $P\subseteq B(A)$. We denote by $\clone{P}\subseteq B(A)$ the set of gates that can be obtained
from the identity $\id_1$ and the elements of $P$ by compositions of gates of equal arity and by
extensions of gates of arities $\ell$ on $A^n$, for  $n\geq \ell$. Clearly $P\mapsto \clone{P}$
is a closure operator. Sets $C\subseteq B(A)$ such that $C=\clone{C}$ are called \emph{reversible iterative algebras}, or \emph{revitals}, for short.
%These are gate sets that are closed under compositions and extensions to  other
%arities, including rewirings of the gates.
We say that gate set $P$ \emph{generates} \what{} $C$ if $C=\clone{P}$. We say that  \what{} $C$ is \emph{finitely generated}
if there exists a finite set $P$ that generates it.

{\new
For example, the revital $\clone{\pi_{(1\; 2)}}$ generated by a single wire swap $\pi_{(1\; 2)}$ is exactly the set of all wire
permutations. Note that a revital is not required to contain wire permutations; revitals that contain all wire permutations are
known as \emph{reversible clones} or \emph{revclones} in clone theory~\cite{boykett15}.
}

 We sometimes refer to general elements of $B_n(A)$ as \emph{word permutations} to distinguish them from the wire permutations.
 In particular, by a wire swap we mean a function $f : A^2 \to A^2$ with $f(a,b) = (b,a)$ for all $a, b \in A$ (or an extension of such a function),
 while a word swap refers to a permutation $(u \; v) \in B_n(A)$ that swaps two individual words of the same length.
 Of course, a wire swap is a composition of word swaps, but the converse is not true. Similarly, and more generally,
 we talk about \emph{wire and word rotations}. A \emph{symbol permutation} is a permutation of $A$.

We are interested in finding out if some naturally arising \what{}s are finitely generated. First of all, we have the \emph{full \what{}}
$B(A)$ and the \emph {alternating \what{}} $\Cthree = \bigcup_n \Alt(A^n)$ that contains all even
permutations.

We also consider permutations that conserve the letters in their inputs.
For any $n\in \N$, define $w_n:A^n\rightarrow \N^A$, such that for all $x\in A^n$, $a\in A$,
$w_n(x)(a)$ the number of occurrences of $a$ in $x$. We say $w_n(u)$ is the \emph{weight} of the word $u$.
A mapping $f\in B_n(A)$ is \emph{conservative} if for all $x\in A^n$, $w_n(f(x)) = w_n(x)$, we let $Cons_n(A)\subseteq B_n(A)$ be the set of conservative maps of arity $n$.
Then $\Ctwo = \cup_{n\in \N}Cons_n(A)$ is the \emph{conservative \what}.
We also consider the set of conservative permutations
that perform an even permutation on each weight class, denoted by $\Cfour$, called the \emph{alternating conservative revital}.

%A wire swap $\alpha$, on $A^n$, has parity $\frac{\vert A\vert (\vert A\vert-1)}{2}|A|^{n-2}$. When $n = 2$, this is even only when $|A| \equiv 0$ or $|A| \equiv 1 \pmod 4$. It follows that $\Cthree$ is a revclone only when $|A| \equiv 0$ or $|A| \equiv 1 \pmod 4$.
%The \what{} $\Cfour$ is never a revclone because swaps are odd permutations on the words with a single symbol different from the others.

%Furthermore, for any $k\in \N$, we can define the mappings that are \emph{conservative modulo $k$}
%by replacing $\N$ with $\Z_k$ in the above definition.
%We will write $Mod_k(A)$ for these maps.

We also investigate a more general concept of conservation. Let $M$ be a commutative monoid and let $\phi : A^* \to M$ be a monoid homomorphism. \emph{Weight classes} under $\phi$ consist of words of equal length having the same value under $\phi$.
Without loss of generality we may assume that $\phi$ also records the
length of the input word in the sense that $|x|\neq |y| \Rightarrow \phi(x)\neq\phi(y)$.
Indeed, if needed we can replace $\phi$ with the length-recording
homomorphism $x\mapsto (\phi(x),|x|)$ from $A^*$ to $M\times \N$ that has the same weight classes as $\phi$.

A mapping $f\in B_n(A)$ is \emph{$\phi$-conservative} if $\phi(f(x)) = \phi(x)$ holds for all $x\in A^n$.
Such $f$ performs a permutation on each weight class.
We let $Cons^\phi_n(A)\subseteq B_n(A)$ be the set of $\phi$-conservative maps of arity $n$.
Then $\Ctwophi = \cup_{n\in \N}Cons^\phi_n(A)$ is the \emph{$\phi$-conservative \what}.
We also consider $\phi$-conservative permutations
that perform an even permutation on each weight class.
We denote the set of such alternating $\phi$-conservative permutations by $\Cfourphi$, and call it the
\emph{alternating $\phi$-conservative revital}.

{\new

Note that $\Ctwo=\Ctwophi$ and $\Cfour=\Cfourphi$ for the homomorphism $\phi:A^*\longrightarrow \N^A$
that counts the numbers of occurrences of letters in a word, that is,
$\phi_{|A^n} = w_n$ for all $n\in \N$.
Also note that $B(A)=\Ctwophi$ and $\Cthree=\Cfourphi$ for the length
homomorphism $\phi: x\mapsto |x|$ from $A^*$ to $\N$. The weight classes of this homomorphism are namely
the full sets $A^n$. This also means that some of the
results concerning the full, the alternating, the conservative and the
alternating conservative revitals follow from our more general results for arbitrary $\phi$-conservative
revitals. However, we include the proofs for these special cases as gentle preludes to the general case.

}

%We see that we may describe a \what{} in two ways, either by the structures they preserve (e.g.\ weight) or by generators. A significant part of this paper deals with interrelations between the two descriptions. <--- I removed this because in this paper we basically just look at one preserved structure (weight and modulo, but mainly just weight); maybe in a longer version this will make more sense

% not used in text:
% Let $Alt_n=Alt(\{1,\dots, n\})$.

Using the terminology in~\cite{xu15}, we say that gate
$f\oplus\id_k\in B_{n+k}(A)$ computes $f\in B_n(A)$ using $k$
\emph{borrowed} bits. The borrowed bits are auxiliary symbols in the computation of $f$
that can have arbitrary initial values, and at the end these values
must be restored unaltered. Regardless of the initial values of the borrowed bits, the permutation $f$ is computed on the
other $n$ inputs. We have cases where borrowed bits help
(Corollary~\ref{cor:borrowedcor}) and cases where they don't (Theorem~\ref{thm:XuGeneralization}).

\section{Induction Lemma}

In this section, we introduce the concept of controlled gate, a generalisation of the Toffoli and Fredkin gates.
With this definition, we are able to formulate a useful induction lemma.
This lemma formalizes  the following idea.
If we can build an $(n+1)$-ary controlled gate in a certain class from
gates of arity $n$, then by replacing each $n$-ary gate with its $(n+1)$-ary extension, we have a ``spare''
control line from each $n+1$ gate, which can then be attached to an extra control input to get an $(n+2)$-ary gate.

%The Toffoli gate and the Fredkin gate are both controlled permutations where a permutation is applied on some inputs if the other inputs contain a particular control signals. The following definition generalizes this concept.

\begin{definition}
Let $k \in \N$ and $P \subseteq B_\ell(A)$. For $w \in A^k$ and $p \in P$, define the function
$\controlled{w}{p} : A^{k+\ell} \to A^{k+\ell}$ by
\[ \controlled{w}{p}(uv) = \left\{\begin{array}{ll}
uv & \mbox{if } u \neq w \\
u p(v) & \mbox{if } u = w
\end{array}\right. \]
where $u \in A^k$, $v \in A^\ell$.
The functions $\controlled{w}{p}$, and more generally their rewirings $\pi_\alpha\circ\controlled{w}{p}\circ\pi_\alpha^{-1}$ for
$\alpha\in S_{k+\ell}$,
are called \emph{$k$-controlled $P$-permutations},
and we denote this set of functions by $CP(k,P) \subseteq B_{k+\ell}(A)$.
We refer to $CP(P) = \bigcup_k CP(k,P)$ as \emph{controlled $P$-permutations}.
\end{definition}

When $P$ is a named family of permutations, such as the family of all swaps in $S_A$, we usually talk about `$k$-controlled swaps' instead of `controlled swap permutations'.
These are word swaps rather than wire swaps.
The Toffoli gate is a (particular) $2$-controlled symbol permutation, while the Fredkin gate is a (particular) $1$-controlled wire swap. Note that the `$k$' in `$k$-controlled' refers to the fact that the number of controlling wires is $k$. % This k was 1, which made no sense.
Of course, sometimes we want to talk about also the particular word $w$ in $\controlled{w}{p}(uv)$. To avoid ambiguity, we say such $\controlled{w}{p}(uv)$ is \emph{$w$-word controlled permutation}. In particular, the Toffoli gate is the $11$-word controlled symbol permutation, while the Fredkin gate is a $1$-word controlled wire swap.

The following lemma formalizes the idea of adding new common control wires to all gates
in a circuit.

\begin{lemma}
\label{lem:ExtraWireLemma}
Let $k,h,\ell \in \N$, $P \subseteq B_{\ell}(A)$ and $Q \subseteq B_n(A)$. If $CP(h,Q) \subseteq \clone{CP(k,P)}$, then
$CP(h+m,Q) \subseteq\clone{CP(k+m,P)}$ for all $m \in \N$.
\end{lemma}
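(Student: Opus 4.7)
The plan is to reduce to the case $m=1$ by iteration, and then promote any $CP(k,P)$-circuit realizing an $h$-controlled $Q$-permutation to a $CP(k+1,P)$-circuit realizing an $(h+1)$-controlled one by attaching a single new, shared control wire---with a fixed trigger value---to every gate of the original circuit simultaneously. The inductive step from $m$ to $m+1$ is just the case $m=1$ applied with the pair $(h+m,k+m)$ substituted for $(h,k)$, so I concentrate on proving: if $CP(h,Q) \subseteq \clone{CP(k,P)}$, then $CP(h+1,Q) \subseteq \clone{CP(k+1,P)}$.

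Given $g \in CP(h+1,Q)$, since rewirings are themselves extensions (with no identity padding) and hence already lie in the revital closure, it suffices to treat the canonical form $g = \controlled{wa}{q}$ with $w \in A^h$, $a \in A$, $q \in Q$. By hypothesis, $\controlled{w}{q} \in CP(h,Q) \subseteq \clone{CP(k,P)}$, so I may write
\[
\controlled{w}{q} \;=\; g_r \circ g_{r-1} \circ \cdots \circ g_1,
\]
where each $g_i \in B_{h+n}(A)$ is an extension of some $p_i \in CP(k,P)$. Append one new wire, and for each $g_i$ let $g_i^a \in B_{h+n+1}(A)$ be the gate that acts as $g_i$ on the first $h+n$ coordinates whenever the new wire carries value $a$, and as the identity on all other inputs. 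A direct case analysis on the value of the new wire yields the crucial identity
\[
g_r^a \circ g_{r-1}^a \circ \cdots \circ g_1^a \;=\; \controlled{wa}{q},
\]
since on $a$-inputs every $g_i^a$ agrees with $g_i$ and the composition reproduces $\controlled{w}{q}$, while on inputs whose new coordinate differs from $a$ every $g_i^a$ is the identity.

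It remains to show that each $g_i^a$ lies in $\clone{CP(k+1,P)}$. Writing $g_i$ as a rewiring of $p_i \oplus \id_{h+n-k-\ell}$ and noting that adding a shared control wire commutes with padding by identity wires, $g_i^a$ is a rewiring of $p_i^a \oplus \id_{h+n-k-\ell}$, where $p_i^a$ denotes $p_i$ equipped with one further control position with trigger value $a$. Concretely, if $p_i = \pi_\alpha \circ \controlled{u}{p} \circ \pi_\alpha^{-1}$ with $u \in A^k$, $p \in P$ and $\alpha \in S_{k+\ell}$, then $p_i^a = \pi_{\alpha'} \circ \controlled{ua}{p} \circ \pi_{\alpha'}^{-1}$ for a suitable $\alpha' \in S_{k+1+\ell}$ that extends $\alpha$ and places the new control index in the right slot. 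Hence $p_i^a \in CP(k+1,P)$, and $g_i^a$ is an extension of $p_i^a$, which puts $g_i^a$ in $\clone{CP(k+1,P)}$. The only genuine technical obstacle is wire bookkeeping: ensuring that the newly added control wire is consistently the same physical coordinate across all of the $g_i^a$, and that the rewiring indices $\alpha'$ are chosen coherently with the extensions describing each $g_i$. This is straightforward once the construction is set up, but it is the one step where notational care is required.
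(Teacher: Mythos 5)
Your proof is correct and follows essentially the same route as the paper's: decompose the $h$-controlled gate via the hypothesis and attach the new shared control wire, with a fixed trigger value, to every gate of that circuit, noting that each augmented gate is an extension of an element of $CP(k+1,P)$ and that rewirings are absorbed by the closure. The only cosmetic difference is that the paper adds the whole extra control word $u\in A^m$ in one step rather than iterating the single-wire case $m=1$.
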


\begin{proof}
%We prove the claim by induction on $m$. By assumption, the claim is true for $m=0$. Assume then that
%$CP(h+m,Q) \subseteq\clone{CP(k+m,P)}$ and consider
%$f \in CP(h+m+1,Q)$. Let $aw \in A^{k+m+1}$ be the control word, with first
%letter $a \in A$, and $\pi \in Q$ be its permutation. The $w$-controlled $\pi$-permutation can be, by induction, implemented by maps in %$CP(k+m,P)$. In all their control words, add the additional input $a$. This implements $f$ as a composition of maps in $CP(k+m+1)$, as %required.
Consider an arbitrary
$f \in CP(h+m,Q)$. Let $uv \in A^{h+m}$ be its control word where $u\in A^m$ and $v\in A^h$, and let $p\in Q$ be its permutation.
By the hypothesis, $f_{v,p}$ %the $v$-controlled permutation $p$
can be implemented by maps in $CP(k,P)$.
In all their control words, add the additional input $u$. This implements $f$ as a composition of maps in $CP(k+m,P)$, as required.
\qed
\end{proof}

The main importance of the lemma comes from the following corollary:

\begin{lemma}[Induction Lemma]
\label{lem:InductionLemma}
Let  $P \subseteq B_{\ell}(A)$ be such that
$CP(k+1,P) \subseteq \clone{CP(k,P)}$ for some
$k \in \N$. Then $\clone{CP(m,P)} \subseteq \clone{CP(n,P)}$ for all $m\geq n\geq k$.
\end{lemma}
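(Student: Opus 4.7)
The plan is to apply the Extra Wire Lemma once with suitably chosen parameters to upgrade the hypothesis from a single instance at arity $k$ to every arity $j \geq k$, and then to iterate.

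Specifically, I would first invoke Lemma~\ref{lem:ExtraWireLemma} with $h = k+1$, $Q = P$, and $n = \ell$, starting from the hypothesis $CP(k+1,P) \subseteq \clone{CP(k,P)}$. The conclusion is that
\[ CP(k+1+m,\, P) \subseteq \clone{CP(k+m,\, P)} \quad \text{for every } m \in \N. \]
Reindexing with $j = k+m$, this reads
\[ CP(j+1,\, P) \subseteq \clone{CP(j,\, P)} \quad \text{for every } j \geq k. \]
In other words, the one-step ``containment'' we were given at level $k$ propagates to every level above $k$.

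Next I would prove $\clone{CP(m,P)} \subseteq \clone{CP(n,P)}$ by induction on $m - n$, keeping $n \geq k$ fixed. The base case $m = n$ is trivial. For the inductive step, using $j = m-1 \geq n \geq k$ in the propagated containment gives $CP(m,P) \subseteq \clone{CP(m-1,P)}$. Since $\clone{\cdot}$ is a closure operator (monotone and idempotent), taking clones on both sides and applying the induction hypothesis yields
\[ \clone{CP(m,P)} \subseteq \clone{\clone{CP(m-1,P)}} = \clone{CP(m-1,P)} \subseteq \clone{CP(n,P)}, \]
which completes the induction.

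There is no real obstacle here; the Induction Lemma is essentially an immediate corollary of the Extra Wire Lemma combined with standard closure-operator properties. The only point that requires a moment's attention is the bookkeeping when applying Lemma~\ref{lem:ExtraWireLemma}: one must take the set $Q$ in that lemma to be $P$ itself (not some larger set derived from $P$), which is legitimate precisely because the hypothesis is phrased so that the ``target'' family at arity $k+1$ is the same $P$-controlled family as the ``source'' family at arity $k$. With that identification made, the rest of the proof is a one-line induction.
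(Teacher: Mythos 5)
Your proof is correct and follows essentially the same route as the paper: apply Lemma~\ref{lem:ExtraWireLemma} with $Q=P$ and $h=k+1$ to propagate the one-step containment to all arities above $k$, then chain the resulting inclusions using the closure-operator properties of $\clone{\cdot}$. The only difference is that you phrase the final chaining as an explicit induction on $m-n$, while the paper simply writes the descending chain of clones; the content is identical.
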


\begin{proof}
We apply Lemma~\ref{lem:ExtraWireLemma}, setting $Q=P$ and $h=k+1$.
%Clearly we may assume $P = Q$. In the previous corollary, setting $P = Q$ and $h = k+1$ we obtain
%\[ CP(k+1,P) \subseteq \clone{CP(k,P)} \implies
We obtain that
$CP(k+m+1,P) \subseteq \clone{CP(k+m,P)}$ for all $m \in \N$. As $\clone{\cdot}$ is a closure operator we have that
$\clone{CP(k+m+1,P)} \subseteq \clone{CP(k+m,P)}$ for all $m \in \N$. Hence
$$
\clone{CP(k,P)} \supseteq \clone{CP(k+1,P)} \supseteq \clone{CP(k+2,P)} \supseteq \dots
$$
which clearly implies the claimed result.
\qed
%Then $CP(k+m,P) \subseteq \clone{CP(k,P)}$ for all $m$, so $\bigcup_m CP(k+m,P) \subseteq \clone{CP(k,P)}$ as required.
\end{proof}

By the previous lemma, in order to show that a \what{} $C$ is finitely generated, it is sufficient to find
some $P \subseteq B_\ell(A)$ such that
\begin{enumerate}
%\item $CP(k,P)\subseteq C$ for all large enough $k$,
\item[(i)] $\gen{CP(m,P)} = C\cap B_{m+\ell}(A)$ for all large enough $m$, and
\item[(ii)] $CP(k+1,P) \subseteq \clone{CP(k,P)}$ for some $k$.
\end{enumerate}
Indeed, if $n\geq k$ is such that (i) holds for all $m\geq n$ then,% for all $m\geq n$,
$$
C\cap B_{m+\ell}(A) = \gen{CP(m,P)} \subseteq \clone{CP(m,P)} \subseteq \clone{CP(n,P)},
$$
where the last inclusion follows from (ii) and the Induction lemma.
Note that by (i) we also have $CP(n,P)\subseteq C$.
So the finite subset $CP(n,P)$ of $C$ generates all but finitely many elements of $C$.

Condition (i) motivates the following definition.
\begin{definition}
\label{def:controluniversal}
Let $C$ be a \what{}. We say that a set of permutations $P \subseteq B_\ell(A)$ is \emph{$n$-control-universal}
for $C$ if $\gen{CP(n-\ell, P)} = C \cap B_n(A)$. More generally, a set $P \subseteq B(A)$ that may contain gates of
different arities, is $n$-control-universal for $C$ if
$$\gen{\bigcup_{\ell}\bigcup_{f\in B_\ell(A)\cap P} CP(n-\ell, P)} = C \cap B_n(A).$$
If $P$ is $n$-control-universal
for all large enough $n$, we say it is \emph{control-universal} for $C$.
\end{definition}
%\begin{definition}
%Let $C$ be a \what{}. We say that a set of permutations $P \subseteq B_\ell(A)$ is \emph{$n$-control-universal}
%for $C$ if $\gen{CP(n-\ell, P)} = C \cap B_n(A)$.
%If $P$ is $n$-control-universal
%for all large enough $n$, we say it is \emph{control-universal} for $C$.
%\end{definition}
%Note that if $P\subseteq P'\subseteq B_\ell(A)$, $P$ is $n$-control-universal for $C$ and $CP(n-\ell,P') \subset C$, then also $P'$ is
%$n$-control-universal for $C$.

In the next two sections we find gate sets that are control-universal for \what{}s of interest.

%\begin{enumerate}
%\item Find $\ell \in \N$ such that $C_{k+\ell} = \langle CP(k,C_\ell) \rangle$ for all $k$.
%\item Find a small set $P$ of generators for $Q = C_\ell$.
%\item Show (for example by a computer search) that $CP(k+1,P) \subseteq \clone{CP(k,Q)}$ for some $k$.
%\end{enumerate}

%\section{The swap group of a graph and the cycling group of a hypergraph}
\section{Some combinatorial group theory}

In this section, we prove some basic results  that the symmetric group is generated
by any `connected' family of swaps, and the alternating group by any `connected' family of $3$-cycles.
Similar results are folklore in combinatorial group theory, but we include full proofs for completeness' sake.

A \emph{hypergraph} is a set $V$ of vertices and a set $E$ of edges, $E \subseteq {\cal P}(V)$.
A $k$-hypergraph is a hypergraph where every edge has the same size, $k$.
A 2-hypergraph is a standard (undirected) graph.
A \emph{path} is a series of vertices $(v_1,\dots,v_n)$ such that for each pair $(v_i,v_{i+1})$ there is
an edge $e_i \in E$ such that $\{v_i,v_{i+1}\} \subseteq e_i$.
Two vertices $a,b\in V$ are \emph{connected} if there is a path $(v_1,\dots,v_n)$ with $v_1=a$ and $v_n=b$.
The relation of being connected is an equivalence relation and induces a partition of the vertices into \emph{connected components}.

If $H$ is a $3$-hypergraph, write $Graph(H)$ for the underlying graph of $H$:
$V(Graph(H)) = V(H)$ and $\{a,b\} \in E(Graph(H)) \iff \exists c: \{a,b,c\} \in E(H)$. Note that by our definition, the connected components of a $3$-hypergraph $H$ are precisely the connected components of $Graph(H)$.

\medskip

\noindent
{\bf Swap group}.
Let $H$ be a graph with nodes $V(H)$ and edges $E(H)$.
The \emph{swap group} $SG(H)$ is the group $G \leq \Sym(V(H))$
generated by swaps $(a \; b)$ with $\{a,b\} \in E(H)$.

\begin{lemma}
\label{lem:ConCompSym}
Let $H$ be a graph with connected components $H_1, \ldots, H_k$. Then
\[ SG(H) = \Sym(V(H_1)) \times \cdots \times \Sym(V(H_k)) \]
\end{lemma}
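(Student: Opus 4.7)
The plan is to prove both inclusions of the claimed equality. The easy direction $SG(H) \subseteq \Sym(V(H_1)) \times \cdots \times \Sym(V(H_k))$ is immediate from the definition: each generating swap $(a\;b)$ with $\{a,b\}\in E(H)$ has both endpoints in the same connected component, so it permutes the vertices of that component and fixes all others. Any composition of such generators therefore respects the partition of $V(H)$ into components.

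For the reverse inclusion, I would first observe that swaps whose supports lie in different connected components commute, since they have disjoint supports. Consequently, it suffices to show that for each connected component $H_i$ the edge swaps within $H_i$ generate $\Sym(V(H_i))$; once this is established, any element of the direct product can be realized by concatenating the independent constructions in the individual components.

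Thus the heart of the proof is the following claim: \emph{if $K$ is a connected graph, the transpositions $(a\;b)$ with $\{a,b\}\in E(K)$ generate $\Sym(V(K))$}. Since all transpositions generate $\Sym(V(K))$, it is enough to realize every transposition $(a\;b)$ as a product of edge swaps. Given $a,b\in V(K)$, connectedness supplies a path $a=v_0,v_1,\ldots,v_n=b$; I would then proceed by induction on $n$. The base case $n=1$ is itself a generator. For $n\geq 2$, the induction hypothesis applied to the shorter path yields $(v_0\;v_{n-1})\in SG(K)$, and $(v_{n-1}\;v_n)$ is a generator, so the conjugation identity $(x\;y)(y\;z)(x\;y)=(x\;z)$ with $x=v_0$, $y=v_{n-1}$, $z=v_n$ produces $(v_0\;v_n)=(a\;b)$ as a product of edge swaps.

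I do not foresee a real obstacle; the argument is an elementary combinatorial-group-theory exercise. The only point requiring slight care is the conjugation identity, whose correctness relies on the intermediate vertex $v_{n-1}$ appearing in both transpositions, which is precisely what traversing the path one edge at a time guarantees.
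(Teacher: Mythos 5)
Your proof is correct and follows essentially the same route as the paper: both reduce to realizing an arbitrary transposition $(a\;b)$ within a connected component via a path, yours by induction with the conjugation identity $(x\;y)(y\;z)(x\;y)=(x\;z)$, the paper by writing out the resulting product along the path explicitly. No gaps.
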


\begin{proof}
All of the swaps act in one of the components and there are no relations between them.
Thus, the swap group is the direct product of some permutation groups of the connected components.
We only need to show that in each connected component $H_i$, we can realize any permutation of vertices.
Since swaps generate the symmetric group, it is enough to show that if $a, b \in V(H_i)$ then
the swap $(a \; b)$ is in $SG(H)$. For this, let $a = a_1, a_2, \ldots, a_\ell = b$ be a
path from $a$ to $b$. Then
\[ (a,b) = (a_1 \; a_2) (a_2 \; a_3) \cdots  (a_{\ell-2} \; a_{\ell-1}) (a_{\ell} \; a_{\ell-1}) (a_{\ell-1} \; a_{\ell-2}) \cdots (a_3 \; a_2) (a_2 \; a_1). \]
\qed
\end{proof}

%\note{Do we want to compose left to right or right to left?}

\medskip

\noindent
{\bf Cycling group}.
Let $H$ be a 3-hypergraph with nodes $V(H)$ and hyperedges $E(H)$.
The \emph{cycling group} $CG(H)$ of $H$
is the group $G \leq \Sym(V(H))$ generated by cycles $(a \; b \; c)$ where $\{a,b,c\} \in E(H)$. %Write $Graph(H)$ for the underlying graph of $H$:
%$V(Graph(H)) = V(H)$ and $(a,b) \in E(Graph(H)) \iff \exists c: (a,b,c) \in E(H)$. %Recall that two nodes of $H$ are \emph{connected} if they are connected by a path in $Graph(H)$.%, and similarly define the connected components of $H$.

The following observation allows us to  take any element of the alternating group given
two 3-hyperedges that intersect in one or two places.

\begin{lemma}
$$
\begin{array}{rcl}
A_4 &=& \gen{(1 \; 2 \; 3), (2 \; 3 \; 4)},\\
A_5 &=& \gen{(1 \; 2 \; 3), (3 \; 4 \; 5)}.
\end{array}
$$
\end{lemma}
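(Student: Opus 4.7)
The plan is to prove each identity by showing that the subgroup $H$ generated by the two listed $3$-cycles has the full order of the ambient alternating group. In both cases the essential step is a single explicit product, and everything else is Lagrange together with a standard structural fact about $A_n$.

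For the first identity, I would compute (following the paper's right-to-left convention) that
$(1\;2\;3)(2\;3\;4) = (1\;2)(3\;4)$,
an element of order $2$. Since $H$ now contains elements of coprime orders $3$ and $2$, we have $6 \mid |H|$, so by Lagrange $|H|\in\{6,12\}$. The possibility $|H|=6$ is ruled out by the well-known fact that $A_4$ has no subgroup of index $2$: such a subgroup would be normal with abelian quotient, forcing every commutator---and hence every $3$-cycle, since for example $(1\;2\;3)=[(1\;2)(3\;4),(1\;3)(2\;4)]$---to lie in it, contradicting the index. Therefore $|H|=12=|A_4|$ and $H=A_4$.

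For the second identity, the analogous computation is $(1\;2\;3)(3\;4\;5)=(1\;2\;3\;4\;5)$, so $H$ contains both a $5$-cycle and a $3$-cycle, giving $15\mid |H|$. By Lagrange the only possibilities inside $A_5$ are $15$, $30$, and $60$. I would exclude $|H|=15$ by noting that every group of order $15$ is cyclic (by the usual Sylow count the Sylow $3$- and $5$-subgroups are both normal), and $A_5$ has no element of order $15$ since a product of disjoint cycles of lengths summing to at most $5$ cannot have order $15$. I would exclude $|H|=30$ by invoking the simplicity of $A_5$, which forbids any index-$2$ (necessarily normal) subgroup. Hence $|H|=60=|A_5|$ and $H=A_5$.

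The only anticipated obstacle is the presentation of the three standard group-theoretic facts used (no index-$2$ subgroup of $A_4$, cyclicity of groups of order $15$, simplicity of $A_5$); each admits a one-line justification inline if the authors prefer not to cite. Aside from that, the argument reduces to two explicit products of $3$-cycles and a short divisibility count.
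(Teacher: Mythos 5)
The paper actually states this lemma without proof (it is offered as a folklore observation), so there is no ``paper approach'' to compare against; any self-contained verification is acceptable, and your order-counting strategy is a reasonable one. Your two product computations are correct under the right-to-left convention: $(1\;2\;3)(2\;3\;4)=(1\;2)(3\;4)$ and $(1\;2\;3)(3\;4\;5)=(1\;2\;3\;4\;5)$, and the $A_5$ half of your argument (divisibility by $15$, exclusion of order $15$ via cyclicity and the absence of elements of order $15$, exclusion of order $30$ via simplicity of $A_5$) is sound.

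There is, however, a genuine error in your justification that $A_4$ has no subgroup of index $2$. You claim every $3$-cycle is a commutator, citing $(1\;2\;3)=[(1\;2)(3\;4),(1\;3)(2\;4)]$; but these two elements lie in the Klein four-group $V=\{e,(1\;2)(3\;4),(1\;3)(2\;4),(1\;4)(2\;3)\}$, which is abelian, so their commutator is the identity. More fundamentally, the derived subgroup of $A_4$ is exactly $V$, which contains no $3$-cycles, so no $3$-cycle is a commutator of elements of $A_4$ (the familiar identities expressing $3$-cycles as commutators use odd permutations and live in $S_4$). The fact you need is true and the patch is easy: in an index-$2$ subgroup $H\leq A_4$ the quotient has exponent $2$, so $H$ contains every square; since $(1\;2\;3)=(1\;3\;2)^2$, all eight $3$-cycles of $A_4$ would lie in $H$, contradicting $|H|=6$. (Alternatively, the abelian-quotient argument gives $V=[A_4,A_4]\subseteq H$, and $4$ does not divide $6$.) With that repair your proof of both identities is complete; a brute-force enumeration of $\gen{(1\;2\;3),(2\;3\;4)}$ and $\gen{(1\;2\;3),(3\;4\;5)}$ would of course also do, which is presumably why the paper omits the proof.
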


\begin{lemma}
\label{lem:ConCompAlt}
Let $H$ be a hypergraph, and let the connected components of $H$ be $H_1, \ldots, H_k$. Then
\[ CG(H) = \Alt(V(H_1)) \times \Alt(V(H_2)) \times \cdots \times \Alt(V(H_k)). \]
\end{lemma}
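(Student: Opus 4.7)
The plan is to prove both inclusions. The containment $CG(H) \subseteq \Alt(V(H_1)) \times \cdots \times \Alt(V(H_k))$ is immediate, since each generator $(a\ b\ c)$ is an even permutation whose support $\{a,b,c\}$, being a hyperedge, lies inside a single connected component. For the reverse direction, I would fix a component $H_i$ and show $\Alt(V(H_i)) \subseteq CG(H_i)$. If $|V(H_i)| = 1$ there is nothing to prove; otherwise every vertex of $H_i$ belongs to at least one hyperedge (else it would be an isolated component of its own), so in particular $|V(H_i)| \geq 3$.

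The main step is an induction along a covering chain of hyperedges. I would enumerate hyperedges $e_1, e_2, \ldots, e_N$ of $H_i$ so that each $e_m$ (for $m \geq 2$) shares at least one vertex with $U_{m-1} := e_1 \cup \cdots \cup e_{m-1}$, and $U_N = V(H_i)$; such a chain can be built greedily using connectedness of $Graph(H_i)$, since any vertex outside the current union can be reached by a path whose consecutive edges each lift to a hyperedge sharing a vertex with the growing union. The inductive hypothesis is $\Alt(U_m) \subseteq CG(H_i)$, with base $m=1$ given by $\Alt(e_1) = \langle (a\ b\ c)\rangle \subseteq CG(H_i)$. For the inductive step, write $e_m = \{p,q,r\}$ with $p \in U_{m-1}$, and for varying distinct $a, b \in U_{m-1} \setminus \{p\}$ (which exist because $|U_{m-1}| \geq 3$), consider the pair $(a\ p\ b) \in \Alt(U_{m-1})$ and $(p\ q\ r) \in CG(H_i)$. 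Their supports share one or two vertices, so by the preceding sublemma they jointly generate the alternating group on the union of their supports. As $a, b$ range over $U_{m-1} \setminus \{p\}$, these subgroups collectively produce every 3-cycle of $\Alt(U_m)$ whose support meets $\{q,r\} \setminus U_{m-1}$; combined with the inductive hypothesis this yields $\Alt(U_m) \subseteq CG(H_i)$.

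The main obstacle is the combinatorial bookkeeping in this inductive step: splitting into the subcases $|e_m \setminus U_{m-1}| \in \{1, 2\}$ and checking in each that the required 3-cycles indeed lie in one of the $A_4$ or $A_5$ subgroups obtained from the sublemma, using the transitivity of $\Alt(U_{m-1})$ on $U_{m-1}$. Everything else---the easy containment, the greedy construction of the covering chain, and the base case---is routine. Iterating the induction until $U_N = V(H_i)$, and then taking the direct product over all components $H_1, \ldots, H_k$, concludes the proof.
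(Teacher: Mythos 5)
Your proposal is correct, but it organizes the argument differently from the paper. The paper proves the lemma by induction on the number of hyperedges of the whole hypergraph $H$: it adds one hyperedge $\{a,b,c\}$ at a time (possibly merging several components), and verifies generation by implementing products of two transpositions $(x\;y)(u\;v)$ within a component, reducing by a short case analysis to the sublemma that two $3$-cycles overlapping in two (resp.\ one) points generate $A_4$ (resp.\ $A_5$). You instead fix a component, build a greedy covering chain of hyperedges $e_1,\dots,e_N$ whose union exhausts the component, and enlarge $\Alt(U_{m-1})$ to $\Alt(U_m)$ by covering each $3$-cycle that meets the new vertices with the $A_4$ or $A_5$ generated by some $(a\;p\;b)$ inside $U_{m-1}$ and the generator $(p\;q\;r)$ on $e_m$; this works because a $3$-cycle's support meets $U_{m-1}$ in at most two points, which can always be absorbed into $\{a,p,b\}$ since $|U_{m-1}|\geq 3$. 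Both proofs pivot on exactly the same $A_4$/$A_5$ sublemma; the difference is the induction scheme and the choice of generators for the alternating groups ($3$-cycles for you, double transpositions for the paper). Your chain-based formulation makes the use of connectivity more explicit and keeps every step inside one component, at the price of the bookkeeping you flag in the inductive step; the paper's edge-by-edge induction must also handle hyperedges that merge components, which makes its final case analysis terser but structurally self-contained. Either route is a complete and valid proof.
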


\begin{proof}
We prove the claim by induction on the number of hyperedges.
If there are no hyperedges, then  $CG(H) = \{\ID(V(H))\}$, as required.
Now, suppose that the claim holds for a hypergraph $H'$ and $H$ is obtained from $H'$ by adding a new
hyperedge $\{a, b, c\}$. If none of $a, b, c$ are part of a hyperedge of $H'$ or are fully contained
in a connected component of $Graph(H')$, then the claim is trivial, as either we add a new connected
component and by definition add its alternating group $\Alt_3 \cong \gen{(a \; b \; c)}$ to $CG(H)$,
or we do not modify the connected components at all.

Every permutation on the right side of the equality we want to prove decomposes into even permutations in
the components. We thus only have to show that a pair of swaps $(x \; y) (u \; v)$ can be implemented for any
vertices $x,y,u,v$ that are in the same component in $H$.
In components that do not intersect $\{a, b, c\}$, we can implement this permutation by
assumption.
If $x,y,u,v \in \{a,b,c\}$, the permutation is in $CG(H)$ by definition.
Since $(x \; y) (u \; v) = (x \; y) (a \; b)^2$ $(u \; v)$ it is enough to implement the permutation $(a \; b) (u \; v)$.

Now, we have two cases (up to reordering variables). Either $u \in \{a,b,c\}$ and
$v \notin \{a,b,c\}$ or $\{u,v\} \cap \{a,b,c\} = \emptyset$.
By analysing cases, the
claim  reduces to the $\Alt_5$ or the $\Alt_4$ situation of the previous Lemma.
\qed
\end{proof}

\section{Control-universality}
\label{sec:ControlUniversality}

Recall control-universality from Definition~\ref{def:controluniversal}: a set of permutations $P$ is control-universal for \what{} $C$
if, for all sufficiently large $n$, the $n$-ary gates in $C$ are precisely the compositions of $n$-ary controlled $P$-permutations.
As corollaries of the previous section, we will now find control-universal families of gates for our  \what{}s of interest:
the full \what{} $\Cone = \bigcup_n \sym(A^n)$, the conservative \what{} $\Ctwo$ and the more general $\phi$-conservative
\what{} $\Ctwophi$, the alternating \what{}
$\Cthree=\bigcup_n \alt(A^n)$ and the alternating conservative \what{} $\Cfour$ and its generalization alternating
$\phi$-conservative \what{} $\Cfourphi$. Corollaries~\ref{cor:P1},
~\ref{cor_consuniversal}, ~\ref{cor_generalizedconsuniversal}, ~\ref{cor:corollaryP3}, ~\ref{cor:Something} and
\ref{cor:AlternatingGeneralizedConservativeConnectedComponents} below provide control-universal gate sets
for these \what{}s. In each case the result is obtained by first constructing an appropriate graph (or $3$-hypergraph) whose edges (hyperedges) correspond to controlled swaps (controlled $3$-cycles) using $P$. With the help of Lemma~\ref{lem:ConCompSym}
(or Lemma~\ref{lem:ConCompAlt}, respectively) one can then conclude that all permutations (even permutations, respectively)
in appropriate sets can be generated.

Note that, as discussed at the end of this section,
the results concerning the generalized conservative gates, presented as items (b$^\phi$) and (d$^\phi$) below,
imply the results on the unrestricted and on the standard conservative gates, presented as items (a), (b), (c) and (d).
These weaker results are included as an introduction to our method.

%For each of these \what{}s, we find a control-universal gate set.
\medskip

\noindent
{\bf (a) The full \what{} $\Cone$}.
Define the graph $G^{(1)}_{A,n}$ that has nodes $A^n$ and edges $\{u, v\}$ where the Hamming distance between $u$ and $v$ is one. The following is obvious:

\begin{lemma}
\label{lem:SwapGraph}
The graph $G^{(1)}_{A,n}$ is connected.
\end{lemma}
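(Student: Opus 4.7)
The plan is to show directly that any two vertices $u,v\in A^n$ are joined by a path in $G^{(1)}_{A,n}$, by interpolating one coordinate at a time.

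Given $u=(u_1,\dots,u_n)$ and $v=(v_1,\dots,v_n)$, I would define the sequence of words $w^{(0)},w^{(1)},\dots,w^{(n)}$ by
\[
w^{(i)} = (v_1,\dots,v_i,u_{i+1},\dots,u_n),
\]
so that $w^{(0)}=u$ and $w^{(n)}=v$. For each $i$, the words $w^{(i-1)}$ and $w^{(i)}$ agree in every coordinate except possibly the $i$-th, where $w^{(i-1)}$ has $u_i$ and $w^{(i)}$ has $v_i$. Hence their Hamming distance is at most one, meaning either they are equal or $\{w^{(i-1)},w^{(i)}\}$ is an edge of $G^{(1)}_{A,n}$. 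Deleting repeated entries from the sequence produces a path in $G^{(1)}_{A,n}$ from $u$ to $v$.

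There is no real obstacle here; the statement is essentially the observation that the Hamming graph on $A^n$ is connected, and it serves as the base combinatorial input to which Lemma~\ref{lem:ConCompSym} will later be applied (producing controlled word swaps between Hamming-neighbors and then the full symmetric group on $A^n$).
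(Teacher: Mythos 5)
Your proof is correct, and it simply spells out the standard coordinate-by-coordinate interpolation argument that the paper treats as obvious (the paper states the lemma without proof). Nothing is missing.
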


{\new
\noindent
Let  $P_1 = \{(a \; b) \;|\; a, b \in A\} \subseteq B_1(A)$, the set of symbol swaps. The swap group of
$G^{(1)}_{A,n}$ is then contained in $\gen{CP(n-1,P_1)}\subseteq \Sym(A^n)$.
By Lemmas~\ref{lem:ConCompSym} and \ref{lem:SwapGraph} the swap group is $\Sym(A^n)$, so $\gen{CP(n-1,P_1)}=\Sym(A^n)$.
We have the following:
}

\begin{corollary}
\label{cor:P1}
For all $n$, $P_1$ is $n$-control-universal for the \what{} $\Cone$.
%for all $n$.
\end{corollary}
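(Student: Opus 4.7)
The plan is to show that elements of $CP(n-1, P_1)$, together with their rewirings, realize precisely the word swaps corresponding to edges of the Hamming graph $G^{(1)}_{A,n}$, and then invoke the two preceding lemmas to conclude that these word swaps generate all of $\Sym(A^n)$.

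First I would unpack the definition of $CP(n-1,P_1)$. A typical generator $f_{w,p}$ with $w = w_1 \dots w_{n-1} \in A^{n-1}$ and $p = (a\;b) \in P_1$ acts as the identity on all inputs except those whose first $n-1$ coordinates equal $w$, and on such inputs it swaps the last coordinate values $a$ and $b$. Concretely, $f_{w,p}$ is the word swap $(w a \; w b) \in \Sym(A^n)$. Allowing arbitrary rewirings $\pi_\alpha \circ f_{w,p} \circ \pi_\alpha^{-1}$ gives every word swap $(u \; v)$ such that $u, v \in A^n$ differ in exactly one coordinate, i.e.\ precisely the swaps along edges of $G^{(1)}_{A,n}$.

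Next I would apply the general machinery. By construction, $\gen{CP(n-1,P_1)}$ contains the swap group $SG(G^{(1)}_{A,n})$. Lemma~\ref{lem:SwapGraph} tells us that $G^{(1)}_{A,n}$ is connected, so by Lemma~\ref{lem:ConCompSym} this swap group equals $\Sym(A^n)$. Conversely, every element of $CP(n-1,P_1)$ is a bijection of $A^n$, so $\gen{CP(n-1,P_1)} \subseteq \Sym(A^n) = \Cone \cap B_n(A)$. Combining both inclusions yields $\gen{CP(n-1,P_1)} = \Cone \cap B_n(A)$, which is exactly $n$-control-universality of $P_1$ for $\Cone$.

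There is no real obstacle here: the statement is essentially an immediate corollary of Lemmas~\ref{lem:ConCompSym} and~\ref{lem:SwapGraph} once one recognises the bookkeeping that identifies rewirings of $1$-controlled symbol swaps with Hamming-distance-one word swaps. The only point to be careful about is making the identification of $CP(n-1,P_1)$ with the edge set of $G^{(1)}_{A,n}$ precise, so that Lemma~\ref{lem:ConCompSym} can be applied verbatim.
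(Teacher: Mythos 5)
Your proof is correct and follows essentially the same route as the paper: identify the rewired $(n-1)$-controlled symbol swaps with the word swaps along edges of the Hamming graph $G^{(1)}_{A,n}$, then apply Lemma~\ref{lem:SwapGraph} and Lemma~\ref{lem:ConCompSym} to conclude $\gen{CP(n-1,P_1)} = \Sym(A^n) = \Cone \cap B_n(A)$. The only difference is that you make the identification with the edge set explicit, which the paper leaves implicit.
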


\medskip

\noindent
{\bf (b) The conservative \what{} $\Ctwo$}.
Define the graph $G^{(2)}_{A,n}$ that has nodes $A^n$ and edges $\{uabv, ubav\}$ for all $a,b\in A$ and words $u,v$ with $|u|+|v|=n-2$.
Because any two words with the same weight are obtained from each other by permuting the positions of letters, and because swaps of adjacent letters generate such permutations, we have the following lemma.

\begin{lemma}
\label{lem:ConservativeConnectedComponents}
The connected components of $G^{(2)}_{A,n}$ are the weight classes.
\end{lemma}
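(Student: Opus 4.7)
The plan is to prove the two inclusions separately: that each edge of $G^{(2)}_{A,n}$ preserves weight, and that any two words of equal weight are connected by a path of adjacent-letter swaps.

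For the first inclusion, I would observe that any edge $\{uabv, ubav\}$ connects two words obtained from one another by transposing two adjacent letters. Since such a transposition does not change the multiset of letters occurring in the word, both endpoints have the same image under $w_n$. Thus the weight is constant along every edge, and hence along every path, so each connected component is contained in a single weight class.

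For the reverse inclusion, I would fix a weight class and show that any two words $x, y \in A^n$ with $w_n(x) = w_n(y)$ are connected by a sequence of edges. The cleanest route is to pick a canonical representative of the weight class (say, the unique word in the class whose letters appear in some fixed total order on $A$) and show every word in the class can be connected to this representative by adjacent transpositions. Concretely, I would proceed by induction on $n$: if $y$ is the canonical sorted word, locate in $x$ the position of the letter $a$ that should occupy the first coordinate of $y$, then use adjacent-letter swaps (which are exactly our edges) to bubble this letter to position $1$, and apply the induction hypothesis to the remaining $n-1$ coordinates (whose restriction gives a shorter instance on the same kind of graph, since adjacent swaps in positions $2,\dots,n$ are still edges of $G^{(2)}_{A,n}$).

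The main obstacle, such as it is, is only notational bookkeeping: making sure the bubble-sort-style argument is phrased cleanly as a path in $G^{(2)}_{A,n}$ rather than implicitly as an action of the symmetric group, and being careful that the induction step genuinely uses edges of the original graph (which it does, since fixing the first letter leaves all adjacent swaps among positions $2,\dots,n$ available). No deeper combinatorial input is needed; the lemma is in essence a restatement of the fact that adjacent transpositions generate $S_n$, transported to the setting of words with a fixed weight.
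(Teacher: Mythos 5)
Your proposal is correct and follows essentially the same route as the paper, which dispatches the lemma by noting that same-weight words differ by a permutation of positions and that adjacent-letter swaps generate such permutations; your bubble-sort induction is just a more explicit rendering of that observation.
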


{\new
\noindent
Let $P_2 = \{(ab \; ba) \;|\; a,b \in A\} \subseteq B_2(A)$.
For $n\geq 2$, the swap group of
$G^{(2)}_{A,n}$ is contained in $\gen{CP(n-2,P_2)} \subseteq \Ctwo\cap B_n(A)$. By
Lemmas~\ref{lem:ConCompSym} and \ref{lem:ConservativeConnectedComponents}
the swap group is $\Ctwo\cap B_n(A)$, so we have the following:
}

\begin{corollary}
\label{cor_consuniversal}
For all $n$,  $P_2$ is $n$-control-universal for the conservative \what{} $\Ctwo$.
\end{corollary}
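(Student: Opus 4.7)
The plan is to establish both inclusions in the equality $\gen{CP(n-2,P_2)} = \Ctwo \cap B_n(A)$ asserted by the definition of $n$-control-universality.

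For the forward inclusion, I would just note that every generator of $\gen{CP(n-2,P_2)}$ is a rewiring of some $\controlled{w}{p}$ with $p = (ab\;ba)\in P_2$. Such a rewiring acts as the identity outside a single weight class and swaps two words of equal weight inside it, hence is conservative. Since $\Ctwo \cap B_n(A)$ is a subgroup of $B_n(A)$, it contains all compositions of such maps.

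The more substantive direction is the reverse inclusion. The strategy is exactly the template announced at the start of Section~\ref{sec:ControlUniversality}: realise every edge swap of $G^{(2)}_{A,n}$ as an element of $\gen{CP(n-2,P_2)}$ and then invoke Lemmas~\ref{lem:ConCompSym} and~\ref{lem:ConservativeConnectedComponents}. Concretely, fix an edge $\{uabv, ubav\}$ with $u,v \in A^*$, $|u|+|v|=n-2$, and $a,b \in A$. Consider $\controlled{uv}{(ab\;ba)} \in CP(n-2,P_2)$; it acts on $A^n$ by leaving coordinates alone whenever the first $n-2$ coordinates differ from $uv$, and otherwise swapping the last two coordinates between $ab$ and $ba$. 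As a permutation of $A^n$ this is exactly the word swap $(uv\cdot ab \; uv \cdot ba)$. Now choose the wire permutation $\alpha \in S_n$ that moves the control positions $1,\dots,n-2$ into the positions occupied by $u$ and $v$ in $uabv$, and moves positions $n-1,n$ into the two positions where $a,b$ appear. The rewiring $\pi_\alpha \circ \controlled{uv}{(ab\;ba)} \circ \pi_\alpha^{-1}$ is then the desired word swap $(uabv \; ubav)$, and it lies in $CP(n-2,P_2)$ by definition.

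Having realised every edge of $G^{(2)}_{A,n}$ as a generator, the swap group $SG(G^{(2)}_{A,n})$ sits inside $\gen{CP(n-2,P_2)}$. By Lemma~\ref{lem:ConservativeConnectedComponents} the connected components of $G^{(2)}_{A,n}$ are exactly the weight classes of $A^n$, and by Lemma~\ref{lem:ConCompSym} the swap group is the direct product $\prod_W \Sym(W)$ over weight classes $W$. But this direct product is precisely the group of weight-preserving permutations, $\Ctwo \cap B_n(A)$. Combined with the forward inclusion, this yields the equality and hence $n$-control-universality for every $n$ (for $n<2$ the statement is vacuous since $B_0(A)$ and $B_1(A)$ each contain only the trivial conservative maps that are also in $\gen{CP(n-2,P_2)}$ under the natural reading). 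The only mildly delicate step is the rewiring bookkeeping in the edge realisation, where one has to be careful that the wire permutation $\alpha$ correctly matches the split between control and acted-on coordinates imposed by the definition of $\controlled{w}{p}$; everything else is a direct application of the combinatorial lemmas from the previous section.
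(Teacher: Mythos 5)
Your proof is correct and follows essentially the same route as the paper: realise each edge of $G^{(2)}_{A,n}$ as a (rewired) controlled swap in $CP(n-2,P_2)$, note that these generators are conservative, and then apply Lemma~\ref{lem:ConservativeConnectedComponents} together with Lemma~\ref{lem:ConCompSym} to identify the swap group with $\Ctwo\cap B_n(A)$. The paper states this argument more compactly in the paragraph preceding the corollary; your version merely makes the rewiring bookkeeping and the small-$n$ cases explicit.
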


\noindent
The classical Fredkin gate that operates on $\{0,1\}^3$ is a $1$-controlled $P_2$-permutation.
However, note that in the case of a larger alphabet the controlled $P_2$-permutations only swap a specific pair of symbols,
not just the arbitrary contents of two cells.

%We can extend this result to $Mod_k(A)$ by considering the graph as above with  added edges
% $(ua^k, ub^k)$ for all $a,b\in A$ and $u\in A^*$ with $|u|=n-k$. Then the
% set of permutations $P_2 \cup \{(a^k \; b^k) \;|\; a,b \in A\} \subseteq B_2(A) \cup B_k(A)$
% is $n$-control-universal for $Mod_k(A)$ %the modulo $k$ conservative \what{}
% for large enough $n$.

 \medskip

{\new
\noindent
{\bf (b$^\phi$) The $\phi$-conservative \what{} $\Ctwophi$}.
Let $\phi:A^*\to M$ be a homomorphism to a commutative monoid $M$, and let $m\in\N$. For $n\geq m$,
define the graph $G^{(2),\phi,m}_{A,n}$ that has nodes $A^n$ and edges $\{uxv, uyv\}$ for all $x,y\in A^m$ such that $\phi(x)=\phi(y)$,
and all words $u,v$ with $|u|+|v|=n-m$.

\begin{lemma}
\label{lem:GeneralizedConservativeConnectedComponents}
For any sufficiently large $m$, and any $n\geq m$, the connected components of $G^{(2),\phi,m}_{A,n}$ are the weight classes
of $A^n$ under $\phi$.
\end{lemma}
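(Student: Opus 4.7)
Each edge of $G^{(2),\phi,m}_{A,n}$ replaces a factor $x$ by $y$ with $\phi(x)=\phi(y)$, so by the homomorphism property $\phi$ is preserved on the whole word; thus connected components refine the weight classes. The task is to show the converse for $m$ sufficiently large: any two $\phi$-equivalent words in $A^n$ are joined by a chain of $R_m$-rewrites, where $R_m=\{(p,q)\in A^m\times A^m : \phi(p)=\phi(q)\}$.

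The key observation is that $\ker\phi$ is finitely generated as a congruence on $A^*$. I would factor $\phi$ through the Parikh map $\pi\colon A^*\to\N^A$ and the induced homomorphism $\psi\colon\N^A\to\phi(A^*)$ on the free commutative monoid. The kernel of $\pi$ as a congruence is generated by the commutator pairs $\{(ab,ba) : a,b\in A\}$. The image $\phi(A^*)$ is a commutative monoid finitely generated by $\phi(A)$, so by R\'edei's theorem on finite presentability of finitely generated commutative monoids, $\ker\psi$ as a congruence on $\N^A$ admits a finite generating set $\{(\alpha_i,\beta_i)\}_{i=1}^k$. Lifting to words $\bar\alpha_i,\bar\beta_i\in A^*$ with $\pi(\bar\alpha_i)=\alpha_i$ and $\pi(\bar\beta_i)=\beta_i$, and using that $\phi$ is length-recording to conclude $|\bar\alpha_i|=|\bar\beta_i|$, one obtains a finite set of equal-length pairs that, together with the commutator pairs, generates $\ker\phi$ as a congruence on $A^*$.

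Let $m_0$ be the maximum length of these generators (and take $m_0\geq 2$). For any $m\geq m_0$ and $n\geq m$, a chain of elementary rewrites $\cdots\to uwv\to uw'v\to\cdots$ witnessing $\phi(x)=\phi(y)$ for $x,y\in A^n$ will be converted into a chain of $R_m$-edges: each elementary rewrite uses a generator $(w,w')$ of length $\ell\leq m$, and since $|u|+|v|=n-\ell\geq m-\ell$ we can write $m-\ell=|\alpha|+|\beta|$ with $\alpha$ a suffix of $u$ and $\beta$ a prefix of $v$; then $(\alpha w\beta,\alpha w'\beta)\in R_m$ implements the original rewrite in place. Stringing these together connects $x$ to $y$ in $G^{(2),\phi,m}_{A,n}$, completing the argument. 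The main obstacle is the appeal to R\'edei's theorem to bound the length of a generating set for $\ker\phi$; once this is in hand, the padding step and the reduction through $\pi$ and $\psi$ are routine.
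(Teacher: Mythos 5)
Your proposal is correct and follows essentially the same route as the paper: both rest on the finite presentability of the finitely generated commutative monoid $\phi(A^*)$ (R\'edei's theorem, cited in the paper via \cite{RoGa99}), take $m$ to be the maximum length of a word in the relations, use the length-recording assumption to get equal-length relation sides, and convert an elementary rewriting chain into edges of $G^{(2),\phi,m}_{A,n}$ by padding each relation with adjacent symbols up to length exactly $m$. Your explicit factorization through the Parikh map and the separate treatment of the commutator pairs $(ab,ba)$ merely spells out what the paper packs into ``a finite presentation of $\phi(A^*)$ under the generator set $\phi(A)$,'' so it is a more detailed rendering of the same argument rather than a different one.
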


\begin{proof}
Each edge connects two vertices in the same weight class, so it only remains to show that the weight classes are connected.
Recall from Section~\ref{sec:two} that we may assume that $\phi$ separates words of different length:
$|x|\neq |y| \Rightarrow \phi(x)\neq\phi(y)$.

Because $\phi(A^*)$ is a finitely generated commutative monoid, it is finitely presented \cite[Chapter~5]{RoGa99}.  Let $m$ be the length of the longest word involved in a finite presentation of $\phi(A^*)$
under the generator set $\phi(A)$. Note that both sides of any relation
$x=y$ in the presentation are words of equal length $\leq m$ over alphabet $\phi(A)$.
 Let $n\geq m$, and let $s,t\in A^n$ be arbitrary vertices in the same weight class, that is, they satisfy $\phi(s)=\phi(t)$.
 It follows from the definition of monoid presentations that there is a sequence $s=w_1,w_2,\dots, w_k=t$
 of words $w_i\in A^n$ where consecutive words are connected by a relation of the monoid presentation, that is, for all $i$ there are
 words $u,v,x,y$ such that $w_i = uxv$ and $w_{i+1} = uyv$, and we have $\phi(x) = \phi(y)$ and $|x|=|y|\leq m$.
 We can assume $|x|=|y|=m$ by appending
 symbols from $u$ or $v$ to $x$ and $y$ if needed, so in fact all $\{w_i, w_{i+1}\}$ are edges in the graph.
\qed
\end{proof}

\noindent
Let $P^{\phi,m}_2 = \{(x \; y) \;|\; x,y\in A^m \mbox{ and } \phi(x)=\phi(y)\} \subseteq B_m(A)$. For $n\geq m$,
the swap group of $G^{(2),\phi,m}_{A,n}$ is contained in $\gen{CP(n-m,P^{\phi,m}_2)}\subseteq \Ctwophi\cap B_n(A)$.
By Lemmas~\ref{lem:ConCompSym} and \ref{lem:GeneralizedConservativeConnectedComponents}, when $m$ is large enough
the swap group is $\Ctwophi\cap B_n(A)$.

\begin{corollary}
\label{cor_generalizedconsuniversal}
 For sufficiently large $m$, set $P^{\phi,m}_2$ is $n$-control-universal for the $\phi$-conservative \what{} $\Ctwophi$,
for all $n\geq m$.
\end{corollary}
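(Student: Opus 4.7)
The plan is to execute the sandwich argument already sketched in the paragraph preceding the statement. After fixing $m$ large enough for Lemma~\ref{lem:GeneralizedConservativeConnectedComponents} to apply, I would prove, for every $n\geq m$, the chain of inclusions
\[ SG(G^{(2),\phi,m}_{A,n}) \;\subseteq\; \gen{CP(n-m,P^{\phi,m}_2)} \;\subseteq\; \Ctwophi\cap B_n(A). \]
Combined with the equality $SG(G^{(2),\phi,m}_{A,n}) = \Ctwophi\cap B_n(A)$ coming from Lemmas~\ref{lem:ConCompSym} and~\ref{lem:GeneralizedConservativeConnectedComponents}, this forces both inclusions to be equalities, and the middle equality is exactly the definition of $n$-control-universality of $P^{\phi,m}_2$ for $\Ctwophi$.

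For the first inclusion, a generating swap of $SG(G^{(2),\phi,m}_{A,n})$ has the form $(uxv\;uyv)$ with $x,y\in A^m$, $\phi(x)=\phi(y)$, and $|u|+|v|=n-m$. Since the word swap $p=(x\;y)$ lies in $P^{\phi,m}_2$, the vertex swap $(uxv\;uyv)$ coincides with $\controlled{uv}{p}$ up to a wire rewiring $\pi_\alpha$ that routes the $m$ acted-on coordinates to the positions where $x$ and $y$ sit inside $uxv$; such rewirings are part of the very definition of $CP(n-m,P^{\phi,m}_2)$. For the second inclusion, each generator $\pi_\alpha\circ \controlled{w}{p}\circ \pi_\alpha^{-1}$ with $p=(x\;y)\in P^{\phi,m}_2$ acts as the identity off the set of inputs whose control coordinates match $w$, and on the remaining $m$ coordinates swaps $x$ with $y$; because $\phi(x)=\phi(y)$, this preserves $\phi$, so the gate lies in $\Ctwophi\cap B_n(A)$, and $\phi$-conservativity is closed under composition.

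The outer equality combines Lemma~\ref{lem:ConCompSym}, which identifies $SG(G^{(2),\phi,m}_{A,n})$ with $\prod_C \Sym(V(C))$ over the connected components $C$ of $G^{(2),\phi,m}_{A,n}$, with Lemma~\ref{lem:GeneralizedConservativeConnectedComponents}, which identifies those components with the $\phi$-weight classes of $A^n$. Since a bijection of $A^n$ is $\phi$-conservative exactly when it restricts to a permutation on each weight class, and conversely any collection of such restrictions assembles into a $\phi$-conservative bijection, this product equals $\Ctwophi\cap B_n(A)$.

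No serious obstacle is expected: the argument is really bookkeeping once the two combinatorial lemmas are in hand. The only minor subtlety is recognising, in the first inclusion, that the arbitrary rewirings built into the definition of $CP(k,P)$ are exactly what permit the acted-on $m$-block of $(x\;y)$ to be placed at any $m$ positions among the $n$ wires, rather than only on a fixed suffix.
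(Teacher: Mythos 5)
Your proposal is correct and follows essentially the same route as the paper, which proves the corollary by exactly this sandwich: the swap group of $G^{(2),\phi,m}_{A,n}$ is contained in $\gen{CP(n-m,P^{\phi,m}_2)}\subseteq \Ctwophi\cap B_n(A)$, and Lemmas~\ref{lem:ConCompSym} and~\ref{lem:GeneralizedConservativeConnectedComponents} force the outer equality. The extra bookkeeping you supply (rewirings placing the $m$-block anywhere, and $\phi$-conservativity of controlled swaps) is exactly what the paper leaves implicit.
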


}

\medskip

\noindent
{\bf (c) The alternating \what{} $\Cthree$}.
Define the $3$-hypergraph $G^{(3)}_{A,n}$ that has nodes $A^n$ and hyperedges
$\{uabv, uacv, udbv\}$ where $a,b,c,d \in A$, $a \neq d$ and $b \neq c$, that is, all triples of words of
which two are at Hamming distance $2$ and others at distance~$1$, and the symbol
differences are in consecutive positions. When $n\geq 2$, any pair of words $u,v\in A^n$ of Hamming distance one are in a
common hyperedge, so by Lemma~\ref{lem:SwapGraph} we have the following:

\begin{lemma}
\label{lem:ConnectedAlt}
If $n \geq 2$, then $G^{(3)}_{A,n}$ is connected. %If $n = 1$, then $G^{(3)}_{A,n}$ is discrete.
\end{lemma}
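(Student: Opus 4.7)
The plan is to reduce the connectedness of the $3$-hypergraph $G^{(3)}_{A,n}$ to the already-established connectedness of the Hamming-distance-one graph $G^{(1)}_{A,n}$ (Lemma~\ref{lem:SwapGraph}). By the convention stated just after the definition of $Graph(H)$, the connected components of a $3$-hypergraph are by definition those of its underlying graph, so it suffices to show that every edge of $G^{(1)}_{A,n}$ is also an edge of $Graph(G^{(3)}_{A,n})$, i.e.\ that every pair of words at Hamming distance one sits inside some hyperedge of $G^{(3)}_{A,n}$.

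I would first dispose of the trivial case $|A|=1$, where $A^n$ is a singleton. Otherwise, fix $u,v\in A^n$ differing in exactly one position $i$, and use the hypothesis $n\geq 2$ to pick two consecutive positions containing $i$: take positions $i$ and $i+1$ if $i<n$, and positions $i-1$ and $i$ if $i=n$. In the first case, write $u=x\,a\,b\,y$ and $v=x\,d\,b\,y$ where $a\neq d$ sit at position $i$ and the common symbol $b$ sits at position $i+1$; then choose any $c\in A$ with $c\neq b$ (which exists since $|A|\geq 2$) and let $w=x\,a\,c\,y$. The triple $\{u,w,v\}=\{x\,a\,b\,y,\;x\,a\,c\,y,\;x\,d\,b\,y\}$ matches exactly the defining pattern of a hyperedge of $G^{(3)}_{A,n}$ and contains both $u$ and $v$. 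The boundary case $i=n$ is symmetric: here $u$ and $v$ share the symbol at position $n-1$ and differ at position $n$, and the third vertex of the hyperedge is obtained by changing the symbol at position $n-1$ of $u$.

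Hence $\{u,v\}$ is an edge of $Graph(G^{(3)}_{A,n})$ whenever $u$ and $v$ are at Hamming distance one, so the edge set of $G^{(1)}_{A,n}$ is contained in that of $Graph(G^{(3)}_{A,n})$. Combined with Lemma~\ref{lem:SwapGraph}, this immediately yields connectedness of $Graph(G^{(3)}_{A,n})$, hence of $G^{(3)}_{A,n}$ as a hypergraph. The only substantive use of the hypothesis $n\geq 2$ is to ensure that at least one of the two consecutive-position windows flanking position $i$ fits inside $\{1,\dots,n\}$; this boundary bookkeeping, rather than any deep combinatorics, is essentially the only thing to verify.
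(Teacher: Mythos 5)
Your proof is correct and matches the paper's argument: the paper likewise observes that for $n\geq 2$ any two words at Hamming distance one lie in a common hyperedge of $G^{(3)}_{A,n}$ and then invokes Lemma~\ref{lem:SwapGraph} together with the convention that a $3$-hypergraph's components are those of its underlying graph. Your version merely spells out the boundary case at position $n$ and the trivial case $|A|=1$, which the paper leaves implicit.
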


\noindent
Let $P_3 = \{(ab \; ac \; db) \;|\; a,b,c,d \in A\} \subseteq B_2(A)$ so that
the cycling group of $G^{(3)}_{A,n}$ is contained in
$\gen{CP(n-2,P_3)}\subseteq \Alt(A^n)$. By Lemmas~\ref{lem:ConCompAlt} and \ref{lem:ConnectedAlt}
the cycling group contains all even permutations
of $A^n$ so we have the following:

\begin{corollary}
\label{cor:corollaryP3}
For all $n\geq 2$, $P_3$ is $n$-control-universal for the alternating \what{} $\Cthree$.
\end{corollary}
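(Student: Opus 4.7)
The plan is to follow the same template already used for parts (a), (b) and (b$^\phi$), now with the $3$-hypergraph $G^{(3)}_{A,n}$ playing the role of $G^{(1)}_{A,n}$ and $G^{(2)}_{A,n}$, and with Lemma~\ref{lem:ConCompAlt} replacing Lemma~\ref{lem:ConCompSym}.

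First I would verify that every $3$-cycle attached to a hyperedge of $G^{(3)}_{A,n}$ is literally an element of $CP(n-2,P_3)$. A hyperedge has the form $\{uabv, uacv, udbv\}$ with $a\neq d$ and $b\neq c$; choosing the control word $w=uv \in A^{n-2}$ and the permutation $p=(ab\;ac\;db)\in P_3$, the map $\controlled{w}{p}$, composed with a suitable rewiring $\pi_\alpha$ that routes the two acted-on coordinates into the positions of $ab$ inside $uabv$, realises precisely the $3$-cycle $(uabv\;uacv\;udbv)$. Hence $CG(G^{(3)}_{A,n}) \subseteq \gen{CP(n-2,P_3)}$, and the reverse containment $\gen{CP(n-2,P_3)} \subseteq \Alt(A^n)$ is immediate because every generator is a $3$-cycle and therefore even.

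Next, Lemma~\ref{lem:ConnectedAlt} guarantees that $G^{(3)}_{A,n}$ is connected whenever $n \geq 2$, so Lemma~\ref{lem:ConCompAlt} applied to its single connected component identifies $CG(G^{(3)}_{A,n})$ with $\Alt(A^n)$. The two inclusions then collapse to
\[
\Alt(A^n) \;=\; CG(G^{(3)}_{A,n}) \;\subseteq\; \gen{CP(n-2,P_3)} \;\subseteq\; \Alt(A^n),
\]
so $\gen{CP(n-2,P_3)} = \Alt(A^n) = \Cthree \cap B_n(A)$, which is exactly the defining condition of $n$-control-universality of $P_3$ for $\Cthree$ with $\ell = 2$.

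The only bookkeeping subtlety, and hence the main obstacle in a careful write-up, is the first step: one must verify that the rewiring flexibility built into the definition of $CP(k,P)$ genuinely allows any two chosen input coordinates to serve as the ``active'' wires acted upon by an element of $P_3$, and with the remaining coordinates playing the role of the control word $uv$ in any prescribed order. This follows directly from the fact that arbitrary $\alpha\in S_{k+\ell}$ are permitted in the conjugation $\pi_\alpha \circ \controlled{w}{p}\circ \pi_\alpha^{-1}$; once this is observed, the corollary is just the composition of Lemmas~\ref{lem:ConnectedAlt} and~\ref{lem:ConCompAlt}, with no deeper work required since the hyperedges of $G^{(3)}_{A,n}$ were engineered precisely to match controlled $P_3$-permutations on the nose.
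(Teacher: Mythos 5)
Your proposal is correct and follows essentially the same route as the paper: each hyperedge $3$-cycle of $G^{(3)}_{A,n}$ is realised as a (rewired) element of $CP(n-2,P_3)$, so $CG(G^{(3)}_{A,n})\subseteq\gen{CP(n-2,P_3)}\subseteq\Alt(A^n)$, and Lemmas~\ref{lem:ConnectedAlt} and~\ref{lem:ConCompAlt} force equality. The paper states this in two lines before the corollary; you merely spell out the rewiring bookkeeping explicitly.
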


\medskip

{\new
\noindent
{\bf (d) The alternating conservative \what{} $\Cfour$}. Let $n\geq 3$.
Define the 3-hypergraph $G^{(4)}_{A,n}$ that has nodes $A^n$ and hyperedges $\{uxv, uyv, uzv\}$ for all distinct
$x,y,z\in A^3$ with equal weights, and all $u,v$ whose lengths satisfy $|u|+|v|=n-3$.
Then, for all distinct $a,b\in A$ and words $u,v\in A^*$ such that $|u|+|v|=n-2$,
the words $uabv$ and $ubav$ are in a common hyperedge. Indeed, any word of length three that contains
at least two different letters has at least three different letter permutations. We obtain from Lemma~\ref{lem:ConservativeConnectedComponents} the following:

\begin{lemma}
\label{lem:AltConsGraph}
If $n \geq 3$, then the connected components of $G^{(4)}_{A,n}$ are the weight classes.
\end{lemma}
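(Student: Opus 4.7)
The plan is to reduce this to Lemma~\ref{lem:ConservativeConnectedComponents} by showing that every edge of $G^{(2)}_{A,n}$ is covered by (i.e.\ has both endpoints inside) a hyperedge of $G^{(4)}_{A,n}$. One inclusion is immediate: if $\{uxv, uyv, uzv\}$ is a hyperedge, then $x,y,z$ are required to have equal weights, so the three vertices lie in the same weight class of $A^n$. Hence every connected component of $G^{(4)}_{A,n}$ is contained in a weight class.

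For the reverse inclusion, fix distinct $a,b \in A$ and words $u,v$ with $|u|+|v|=n-2$. Since $n \geq 3$, at least one of $u,v$ is nonempty, so we can pick a letter $c$ sitting immediately next to the swap site: either write $u = u'c$ (and then work with the three-letter window $cab$ versus $cba$, with outer context $u', v$), or, if $u$ is empty, write $v = cv'$ (and work with $abc$ versus $bac$, with outer context $u, v'$). In either case, the swap $uabv \leftrightarrow ubav$ becomes a substitution of two length-$3$ words $x,y$ that are permutations of each other, leaving a fixed context of length $n-3$.

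The remaining step is to exhibit a third length-$3$ word $z$, distinct from $x$ and $y$, with the same multiset of letters. This is the place where one must check a small combinatorial statement, but it is not truly an obstacle: because $a \neq b$, the multiset underlying $x$ and $y$ contains at least two distinct letters, so the number of distinct permutations of that multiset is either $3$ (when two of the three letters coincide) or $6$ (when all are distinct), and in every case a third permutation $z \notin \{x,y\}$ is available. Thus $\{uxv, uyv, uzv\}$ is a hyperedge of $G^{(4)}_{A,n}$ containing both $uabv$ and $ubav$.

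Combining these observations, any two words connected by an edge of $G^{(2)}_{A,n}$ lie in a common hyperedge of $G^{(4)}_{A,n}$, so they are connected in $G^{(4)}_{A,n}$. By Lemma~\ref{lem:ConservativeConnectedComponents}, the weight classes are connected in $G^{(2)}_{A,n}$, hence also in $G^{(4)}_{A,n}$, which together with the trivial inclusion yields the claim. The only point requiring care is the case split on whether the extra context letter equals $a$ or $b$; I expect this to be entirely routine once the two subcases ($u$ nonempty vs.\ $u$ empty) are handled symmetrically.
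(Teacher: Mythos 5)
Your proof is correct and follows essentially the same route as the paper: the paper also observes that for distinct $a,b$ each $G^{(2)}_{A,n}$-edge $\{uabv,ubav\}$ sits inside a hyperedge of $G^{(4)}_{A,n}$, because a length-three word with at least two distinct letters has at least three distinct letter permutations, and then invokes Lemma~\ref{lem:ConservativeConnectedComponents} together with the fact that hyperedges stay within weight classes. Your write-up merely spells out the windowing and counting details that the paper leaves implicit.
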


Let $P_4 = \{(x \; y \; z) \;|\; x,y,z \in A^3 \mbox{ and } x,y,z \mbox{ have equal weights } \} \subseteq B_3(A)$
consist of all $3$-cycles inside weight classes in $A^3$.
The cycling group of $G^{(4)}_{A,n}$ is contained in
$\gen{CP(n-3,P_4)}\subseteq \Cfour\cap B_n(A)$. But by Lemmas~\ref{lem:ConCompAlt} and \ref{lem:AltConsGraph}
the cycling group is $\Cfour\cap B_n(A)$.
}

\begin{corollary}
\label{cor:Something}
Set $P_4$ is $n$-control-universal
for the alternating conservative \what{} $\Cfour$, for all $n\geq 3$.
\end{corollary}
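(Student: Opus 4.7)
The plan is to establish both inclusions in $\gen{CP(n-3, P_4)} = \Cfour \cap B_n(A)$, with essentially all of the real work already done by Lemmas~\ref{lem:ConCompAlt} and~\ref{lem:AltConsGraph}.

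The forward inclusion is by inspection. Any generator $f \in CP(n-3, P_4)$ is, up to rewiring, of the form $\controlled{w}{p}$ with $w \in A^{n-3}$ and $p = (x\;y\;z) \in P_4$, so it fixes every input except three that lie in a single weight class of $A^n$ and on which it acts as a $3$-cycle. Hence $f$ is conservative and acts as an even permutation on every weight class, so $f \in \Cfour$; taking compositions preserves these two properties.

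For the reverse inclusion, I would argue that $CG(G^{(4)}_{A,n}) \subseteq \gen{CP(n-3, P_4)}$. A hyperedge $\{uxv, uyv, uzv\}$ of $G^{(4)}_{A,n}$ has $x,y,z \in A^3$ of equal weight, so $(x\;y\;z) \in P_4$, and the associated $3$-cycle on $A^n$ is precisely a rewiring of $\controlled{w}{(x\;y\;z)}$ for the control word $w \in A^{n-3}$ built by concatenating the letters of $u$ and $v$ in the positions of $u$ and $v$; the rewiring $\pi_\alpha$ just places the three ``acted'' coordinates in the right slots. By Lemma~\ref{lem:AltConsGraph} (applicable since $n \geq 3$), the connected components of $G^{(4)}_{A,n}$ are the weight classes of $A^n$, and then Lemma~\ref{lem:ConCompAlt} identifies $CG(G^{(4)}_{A,n})$ with $\prod_W \Alt(W)$, where $W$ ranges over the weight classes of $A^n$. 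This product is exactly $\Cfour \cap B_n(A)$, by the very definition of the alternating conservative \what.

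The only potential obstacle is the routine bookkeeping of realising each hyperedge $3$-cycle as a controlled $P_4$-permutation with the three non-control wires placed in the correct positions; this is immediate since rewirings by arbitrary $\pi_\alpha$ are explicitly allowed in $CP(k,P)$ by Definition~\ref{def:controluniversal} and the preceding definition of controlled $P$-permutations. The argument closely parallels the three preceding corollaries and introduces no new ideas beyond the two cited lemmas applied to the specific hypergraph $G^{(4)}_{A,n}$.
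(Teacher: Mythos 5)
Your proposal is correct and follows the same route as the paper: the paper likewise bounds the cycling group of $G^{(4)}_{A,n}$ between $\gen{CP(n-3,P_4)}$ and $\Cfour\cap B_n(A)$ and then invokes Lemmas~\ref{lem:ConCompAlt} and~\ref{lem:AltConsGraph} to identify all three. The only difference is that you spell out explicitly the rewiring that realises each hyperedge $3$-cycle as a controlled $P_4$-permutation, which the paper leaves implicit.
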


{\new
\noindent
{\bf (d$^\phi$) The alternating $\phi$-conservative \what{} $\Cfourphi$}.
Let $\phi:A^*\to M$ be a homomorphism to a commutative monoid $M$, and let $m\in\N$. For $n\geq m$,
define the 3-hypergraph $G^{(4),\phi,m}_{A,n}$ that has nodes $A^n$ and hyperedges $\{uxv, uyv, uzv\}$ for all distinct
$x,y,z\in A^m$ such that $\phi(x)=\phi(y)=\phi(z)$,
and all words $u,v$ with $|u|+|v|=n-m$.

\begin{lemma}
\label{lem:AlternatingGeneralizedConservativeConnectedComponents}
For all sufficiently large $m$, and all  $n\geq m$, the connected components of $G^{(4),\phi,m}_{A,n}$ are the weight classes
of $A^n$ under $\phi$.
\end{lemma}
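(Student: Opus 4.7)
The plan is to mimic the proof of Lemma~\ref{lem:GeneralizedConservativeConnectedComponents}, simulating each elementary pair-swap there by a single hyperedge of $G^{(4),\phi,m}_{A,n}$. The easy direction---that each connected component is contained in a weight class---is immediate since every hyperedge preserves the $\phi$-image, so only the converse direction needs work.

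The key step is to guarantee that for sufficiently large $m$, every non-singleton $\phi$-class of $A^m$ contains at least three elements, so that a genuine hyperedge is available whenever the pair lemma asks for a swap. Because $M$ is commutative, $\phi$ factors through the abelianization of $A^*$, so all anagrams of a word share its $\phi$-image. Thus a non-constant word $w \in A^m$ (one using at least two distinct letters) has at least $m$ anagrams in its class, which is $\geq 3$ as soon as $m \geq 3$. The only remaining obstacle is a $\phi$-class of size exactly two consisting of two distinct constants $\{a^m, b^m\}$. For this to occur we must have $m\phi(a)=m\phi(b)$ in $M$ yet $j\phi(a)\neq j\phi(b)$ for every $1\leq j<m$---otherwise the non-constant word $a^{m-j}b^j$ would belong to the class. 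The set $\{j\in\N : j\phi(a)=j\phi(b)\}$ is closed under addition, so it is a submonoid of $(\N,+)$; the above conditions force $m$ to be its smallest positive element. Since $A\times A$ is finite, only finitely many values of $m$ can arise this way over all pairs $(a,b)$, and we simply choose $m$ larger than all of them.

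Fix $m$ large enough to satisfy this property as well as the hypothesis of Lemma~\ref{lem:GeneralizedConservativeConnectedComponents}. For any $n\geq m$ and $s,t\in A^n$ with $\phi(s)=\phi(t)$, Lemma~\ref{lem:GeneralizedConservativeConnectedComponents} supplies a sequence $s=w_0,w_1,\dots,w_r=t$ in which consecutive words $w_i=uxv$ and $w_{i+1}=uyv$ differ only in a length-$m$ window with $\phi(x)=\phi(y)$ and $x\neq y$. By the preceding paragraph, the $\phi$-class of $x$ in $A^m$ contains a third element $z$, so $\{uxv,uyv,uzv\}$ is a hyperedge of $G^{(4),\phi,m}_{A,n}$ joining $w_i$ to $w_{i+1}$. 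Chaining these hops produces a hyperedge-path from $s$ to $t$.

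I expect the main obstacle to be the arithmetic step that excludes size-two $\phi$-classes of the form $\{a^m,b^m\}$; once that is done, the rest is a direct adaptation of Lemma~\ref{lem:GeneralizedConservativeConnectedComponents} using triples in place of pairs.
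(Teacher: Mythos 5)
Your proof is correct, and the overall skeleton matches the paper's: both arguments reduce to Lemma~\ref{lem:GeneralizedConservativeConnectedComponents} and then upgrade each pair-swap step of that lemma to a hyperedge of $G^{(4),\phi,m}_{A,n}$. Where you genuinely diverge is in how the third word of the hyperedge is produced. The paper keeps the two differing words in a window of length $m-1$ and, using $n\geq m$, absorbs one adjacent letter $a$ from the surrounding context: by commutativity the four words $ax,ay,xa,ya$ share the same $\phi$-value, and a short case analysis (using $x\neq y$) shows at least three of them are distinct, which yields a hyperedge of window length $m$ containing both vertices; no statement about sizes of weight classes of $A^m$ is needed, and the bound on $m$ is just one more than the presentation bound of Lemma~\ref{lem:GeneralizedConservativeConnectedComponents}. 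You instead keep the window at length $m$ and prove outright that every non-singleton $\phi$-class of $A^m$ has at least three elements once $m$ is large: anagram counting handles classes containing a non-constant word, and the two-element classes $\{a^m,b^m\}$ are excluded because $\{j: j\phi(a)=j\phi(b)\}$ is an additive submonoid of $\N$, so such a class can only occur when $m$ is its least positive element, leaving finitely many bad values of $m$ over the finitely many pairs $(a,b)$. Both arguments are sound; the paper's rotation trick is shorter and gives an explicit, smaller bound on $m$, while yours buys a structural fact of independent interest (a lower bound on the size of non-trivial weight classes, which the paper only establishes later, inside the proof of Theorem~\ref{thm:altphicons}, by a related argument) at the cost of the extra numerical-monoid step and a possibly larger $m$.
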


\begin{proof}
If $x,y\in A^{m-1}$ are such that  $\phi(x)=\phi(y)$ then for all $a\in A$ holds
$\phi(xa)=\phi(ya)=\phi(ay)=\phi(ax)$. If, moreover, $x\neq y$ then
there are at least three distinct elements among $ax$, $ay$, $xa$ and $ya$ and,
in addition, $ax\neq ay$ and $\ xa\neq ya$.
It follows that if $wxw', wyw' \in A^n$ where $n\geq m$ and
$x,y\in A^{m-1}$ are distinct and such that $\phi(x)=\phi(y)$, then there are three distinct words $s,t,u\in A^{m}$ such that
$\phi(s)=\phi(t)=\phi(u)$, and  $wxw', wyw' \in \{zsz', ztz', zuz'\}$ for some words $z,z'$. This means that for $n\geq m$
any vertices connected by an edge in the graph $G^{(2),\phi,m-1}_{A,n}$ belong to a hyperedge in the
3-hypergraph $G^{(4),\phi,m}_{A,n}$, and hence the connected components of the first one are subsets of the connected components of the latter one. By Lemma~\ref{lem:GeneralizedConservativeConnectedComponents}, if $m$ is sufficiently large the
connected components of $G^{(2),\phi,m-1}_{A,n}$ are the weight classes. Since the hyperedges of $G^{(4),\phi,m}_{A,n}$ only connect
vertices of the same weight class, we conclude that for sufficiently large $m$ and all $n\geq m$,
the connected components of the hypergraph $G^{(4),\phi,m}_{A,n}$ are precisely the weight classes.
\qed
\end{proof}

\noindent
Let $P^{\phi,m}_4 = \{(x \; y \; z) \;|\;
 x,y,z\in A^m \mbox{ and } \phi(x)=\phi(y)=\phi(z)\} \subseteq B_m(A)$.
The cycling group of $G^{(4),\phi,m}_{A,n}$ is contained in
$\gen{CP(n-m,P^{\phi,m}_4)}\subseteq \Cfourphi\cap B_n(A)$. By Lemmas~\ref{lem:ConCompAlt} and
\ref{lem:AlternatingGeneralizedConservativeConnectedComponents}
the cycling group is $\Cfourphi\cap B_n(A)$ when $m$ is sufficiently large.

\begin{corollary}
\label{cor:AlternatingGeneralizedConservativeConnectedComponents}
For sufficiently large $m$ and all $n\geq m$, set $P^{\phi,m}_4$ is $n$-control-universal for the alternating $\phi$-conservative \what{} $\Cfourphi$.
\end{corollary}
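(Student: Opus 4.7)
The plan is to mirror the pattern used for Corollaries \ref{cor:P1}, \ref{cor_consuniversal}, \ref{cor_generalizedconsuniversal}, \ref{cor:corollaryP3}, and especially \ref{cor:Something}, combining the hypergraph description just established in Lemma \ref{lem:AlternatingGeneralizedConservativeConnectedComponents} with Lemma \ref{lem:ConCompAlt}. So I would fix $m$ large enough that Lemma \ref{lem:AlternatingGeneralizedConservativeConnectedComponents} applies, take $n \geq m$, and work with the 3-hypergraph $G^{(4),\phi,m}_{A,n}$ whose hyperedges are precisely the triples $\{uxv, uyv, uzv\}$ with $x,y,z \in A^m$ distinct and $\phi(x)=\phi(y)=\phi(z)$.

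First I would note that each element of $P^{\phi,m}_4$ is by definition a 3-cycle on $A^m$ permuting three words of equal $\phi$-weight; hence each such 3-cycle lies in $\Cfourphi \cap B_m(A)$. Given any hyperedge $\{uxv, uyv, uzv\}$ of $G^{(4),\phi,m}_{A,n}$, the corresponding $(n-m)$-controlled $P^{\phi,m}_4$-permutation (with control word the concatenation of $u$ and $v$ placed on the appropriate wires via a rewiring) is exactly the word 3-cycle $(uxv\ uyv\ uzv)$ on $A^n$. This permutation fixes every weight class setwise and acts as an even permutation on the one weight class containing $uxv$, so it lies in $\Cfourphi \cap B_n(A)$. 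Therefore $\gen{CP(n-m, P^{\phi,m}_4)} \subseteq \Cfourphi \cap B_n(A)$, and this subgroup contains the cycling group $CG(G^{(4),\phi,m}_{A,n})$.

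For the reverse inclusion, Lemma \ref{lem:AlternatingGeneralizedConservativeConnectedComponents} identifies the connected components of $G^{(4),\phi,m}_{A,n}$ with the weight classes of $A^n$ under $\phi$. Applying Lemma \ref{lem:ConCompAlt} then gives
\[
CG(G^{(4),\phi,m}_{A,n}) \;=\; \prod_{C} \Alt(C),
\]
where $C$ ranges over the weight classes of $A^n$. But this direct product is exactly $\Cfourphi \cap B_n(A)$, since an alternating $\phi$-conservative gate is by definition a permutation that restricts to an even permutation on each weight class. Combining the two inclusions yields $\gen{CP(n-m, P^{\phi,m}_4)} = \Cfourphi \cap B_n(A)$, which is precisely the control-universality claim.

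The main obstacle is essentially already dispatched by the preceding lemma: the nontrivial content is the connectedness of the hypergraph on each weight class, and this requires $m$ large enough for the monoid presentation argument used in Lemma \ref{lem:GeneralizedConservativeConnectedComponents} to kick in. The remaining steps are bookkeeping: checking that controlled $P^{\phi,m}_4$-permutations are $\phi$-conservative and even (immediate, since control wires are untouched and 3-cycles are even), and then invoking Lemma \ref{lem:ConCompAlt} to identify the cycling group with the product of alternating groups on components.
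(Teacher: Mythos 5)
Your proposal is correct and follows essentially the same route as the paper: the paper likewise observes that the cycling group of $G^{(4),\phi,m}_{A,n}$ is contained in $\gen{CP(n-m,P^{\phi,m}_4)}\subseteq \Cfourphi\cap B_n(A)$ and then applies Lemma~\ref{lem:ConCompAlt} together with Lemma~\ref{lem:AlternatingGeneralizedConservativeConnectedComponents} to conclude equality. No gaps.
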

Note that Corollaries~\ref{cor:corollaryP3} and \ref{cor:Something} are special cases of
Corollary~\ref{cor:AlternatingGeneralizedConservativeConnectedComponents}, in the same way as
Corollaries~\ref{cor:P1} and \ref{cor_consuniversal} are special cases of
Corollary~\ref{cor_generalizedconsuniversal}. From the proof we can see that a bound on $m$ in
Corollary~\ref{cor_generalizedconsuniversal} is given by the maximum length of the words occurring
in the relations in the presentation of $\phi(A^*)$,
and the corresponding bound for $m$ in Corollary~\ref{cor:AlternatingGeneralizedConservativeConnectedComponents} is
one greater. If $\phi:A^*\rightarrow \N^A$ is the  homomorphism that counts the numbers of occurrences of letters,
the image $\phi(A^*)=\N^A$ has a presentation using words of length two (that defines the commutation of symbols);
hence the values $2$ and $3$ for $m$ in  Corollaries~\ref{cor_consuniversal} and \ref{cor:Something}, respectively.
If $\phi:A^*\rightarrow \N$ is the homomorphism $x\mapsto |x|$ then words of length one are used in the presentation stating that
$\phi(a)=\phi(b)$ for all $a,b\in A$; hence the values $1$ and $2$ for $m$ in  Corollaries~\ref{cor:P1} and \ref{cor:corollaryP3}, respectively.
}

\section{Finite generating sets of gates}

In order to apply the Induction Lemma we first observe that 2-controlled 3-word-cycles in any five element set can obtained from 1-controlled 3-word-cycles.

\begin{lemma}
\label{lem:fivelemma}
Let $X\subseteq A^n$ contain at least five elements, and let
\[ P=\{(x \; y \; z) \;|\; x,y,z \in X \mbox{ all distinct} \}\subseteq B_n(A) \]
consist of all $3$-word-cycles in $X$.
Then $CP(2,P) \subseteq \clone{CP(1,P)}$.
\end{lemma}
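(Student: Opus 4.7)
The plan is to realize each 2-controlled 3-cycle as a commutator of two extended 1-controlled 3-cycles, exploiting the fact that in $\Alt_5$ every 3-cycle can be written as a commutator of two 3-cycles sharing one common point.

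The first step I would take is to establish the driving group-theoretic identity. Since $|X|\geq 5$, for any target 3-cycle $(x\;y\;z)$ with $x,y,z\in X$, I can pick two further distinct elements $u,v\in X\setminus\{x,y,z\}$ and verify by direct computation that
\[ (x\;y\;z) \;=\; [(x\;y\;u),(x\;z\;v)] \;=\; (x\;y\;u)(x\;z\;v)(x\;u\;y)(x\;v\;z). \]
This is seen in two quick lines: conjugation relabels cycle entries, giving $(x\;y\;u)(x\;z\;v)(x\;y\;u)^{-1}=(y\;z\;v)$, and then a direct product check shows $(y\;z\;v)(x\;v\;z)=(x\;y\;z)$.

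Next I would use this identity to decompose an arbitrary $F\in CP(2,P)$. Since the closure operator $\clone{\cdot}$ absorbs all rewirings via extensions, it suffices to treat the case where $F$ is the 2-controlled 3-cycle with control word $ab\in A^2$ on wires $n+1,n+2$ applying $(x\;y\;z)$ to wires $1,\ldots,n$. Writing $\sigma=(x\;y\;u)$ and $\tau=(x\;z\;v)$, which both lie in $P$, I define $\tilde\sigma,\tilde\tau\in B_{n+2}(A)$ as rewired extensions of elements of $CP(1,P)$: $\tilde\sigma$ is the extension of the 1-controlled gate $f_{a,\sigma}$ with wire $n+1$ as control and wire $n+2$ as the wire added by extension (hence untouched), and $\tilde\tau$ is the extension of $f_{b,\tau}$ with the roles of wires $n+1$ and $n+2$ swapped. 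Both gates lie in $\clone{CP(1,P)}$ by construction.

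Finally I would verify by a fiber-wise analysis that $[\tilde\sigma,\tilde\tau]=F$. Decomposing $A^{n+2}$ along the values $(c_1,c_2)$ of the two control wires, $\tilde\sigma$ acts on the target as $\sigma$ when $c_1=a$ and as the identity otherwise, and $\tilde\tau$ acts analogously according to $c_2$. Thus the commutator equals $[\sigma,\tau]=(x\;y\;z)$ on the fiber $(a,b)$ and the identity on every other fiber, which is exactly the action of $F$. The main obstacle to flag is finding the right commutator identity: with only four elements this approach would fail, since $[\Alt_4,\Alt_4]$ is the Klein four-group and contains no 3-cycle, so the hypothesis $|X|\geq 5$ is precisely what makes the argument go through; the remaining fiber verification is routine.
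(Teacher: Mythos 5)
Your proof is correct, but it takes a genuinely different route from the paper's. The paper also attaches one control symbol to each of two auxiliary permutations and then verifies the result fiber by fiber, but its auxiliary permutations are the involutions $p_1=(s\;t)(x\;y)$ and $p_2=(s\;t)(y\;z)$: it writes the target cycle as $(x\;y\;z)=(p_2p_1)^2$, realizes each controlled $p_i$ as a product of two $1$-controlled $3$-cycles from $P$, and obtains the $2$-controlled cycle as $g\circ g$ with $g=\controlled{*b}{p_2}\circ\controlled{a*}{p_1}$, i.e.\ eight $1$-controlled gates in total; the off-fiber cases vanish there because each $p_i$ is an involution. You instead use the identity $(x\;y\;z)=[(x\;y\;u),(x\;z\;v)]$, whose factors (and their inverses) are themselves $3$-cycles in $X$, so no further factorization is needed: the commutator of the two singly-controlled gates yields the doubly-controlled cycle with only four $1$-controlled gates, and the off-fiber cases cancel automatically as $\sigma\sigma^{-1}$ or $\tau\tau^{-1}$, with no involution hypothesis. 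Both arguments consume exactly five distinct elements of $X$, and your remark that a commutator of $3$-cycles in $\Alt_4$ lands in the Klein four-group explains why neither method can get by with four; in fact, since $p_1,p_2$ are involutions the paper's $(p_2p_1)^2$ is itself the commutator $[p_2,p_1]$, so your construction is a leaner variant of the same underlying idea that trades the involution trick for a sharper choice of factors.
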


\begin{proof}
Let $x,y,z\in X$ be pairwise different, and
pick $s,t\in X$ so that $x,y,z,s,t$ are five distinct elements of $X$.
Let $p_1=(s \; t)(x \; y)$ and $p_2=(s \; t)(y \; z)$. Then $p_1$ and $p_2$ consist of two disjoint word swaps, so they are
both involutions. Moreover, $(x \; y \; z) = p_2 p_1 p_2 p_1$. Further, we have that
$$
\begin{array}{rcl}
p_1&=&(x \; s \; y)(s \; t \; x), \mbox{ and}\\
p_2&=&(y \; s \; z)(s \; t \; y).
\end{array}
$$
It is important to recall that we compose permutations from right to left.

Let $a,b\in A$ be arbitrary and consider the $2$-controlled $P$-permutation $f=\controlled{ab}{(x \; y \; z)} \in B_{2+n}(A)$ determined by the control word $ab$
and the 3-word-cycle $(x \; y \; z)$. Then $f=g \circ g$ where
$$
g = \controlled{*b}{p_2} \circ \controlled{a*}{p_1} =
\controlled{*b}{(y \; s \; z)}\circ
\controlled{*b}{(s \; t \; y)}\circ
\controlled{a*}{(x \; s \; y)}\circ
\controlled{a*}{(s \; t \; x)}
$$
is a composition of four 1-controlled $P$-permutations, where the star symbol indicates the control symbol not used by the gate.
See Figure~\ref{fig:threecycles} for an illustration. In the picture the inputs arrive from the left, so the gates
are composed in the reverse order compared to the text.

\begin{figure}[ht]
\begin{center}
\includegraphics[width=11.5cm]{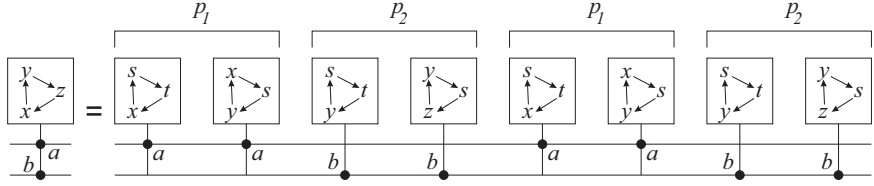}
\end{center}
\caption{A decomposition of the $ab$-controlled 3-word-cycle $(x \; y \; z)$ into a composition of eight 1-controlled 3-word-cycles. Note that in our illustrations the input-output direction of gates is from left to right. This means that the gates appear in the opposite order as in the main text where all compositions are written from right to left.}
\label{fig:threecycles}
\end{figure}

To verify that indeed $f=g \circ g$, consider an input $w=a'b'u$ where $a',b'\in A$ and $u\in A^n$. If $a'\neq a$ then
$g(w)=\controlled{*b}{p_2}(w)$, so that $g\circ g(w)=w=f(w)$ since $p_2$ is an involution. Analogously, if $b'\neq b$ then $g\circ g(w)=w=f(w)$, because $p_1$ is an involution.
Suppose then that $a'=a$ and $b'=b$. We have $g\circ g(w)=ab( p_2 p_1 p_2 p_1(u)) = f(w)$. We conclude that $f\in \clone{CP(1,P)}$, and because
$f$ was an arbitrary element of $CP(2,P)$, up to reordering the input and output symbols, the claim $CP(2,P)\subseteq \clone{CP(1,P)}$ follows.
\qed
\end{proof}

\begin{corollary}
\label{cor:threecycles}
Let $X\subseteq A^n, P\subseteq B_n(A)$ be as in Lemma~\ref{lem:fivelemma}. Then we have
$\clone{CP(m,P)} \subseteq \clone{CP(1,P)}$
for all $m\geq 1$.

\end{corollary}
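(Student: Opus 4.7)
The plan is to derive this corollary as an immediate consequence of the Induction Lemma (Lemma~\ref{lem:InductionLemma}) applied to the conclusion of Lemma~\ref{lem:fivelemma}. No new combinatorial work is needed; it is purely a matter of checking that the hypotheses of the Induction Lemma are met.

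First I would recall that Lemma~\ref{lem:fivelemma} gives exactly $CP(2,P) \subseteq \clone{CP(1,P)}$ under the hypothesis that $X$ has at least five elements and $P$ is the set of all 3-word-cycles in $X$. This is precisely the condition $CP(k+1,P) \subseteq \clone{CP(k,P)}$ in the hypothesis of the Induction Lemma, with $k=1$. Applying the Induction Lemma then yields $\clone{CP(m,P)} \subseteq \clone{CP(n,P)}$ for all $m \geq n \geq 1$.

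Finally, I would specialize to $n=1$ to conclude that $\clone{CP(m,P)} \subseteq \clone{CP(1,P)}$ for all $m \geq 1$, which is the statement of the corollary. There is no real obstacle here — the whole content of the result is packaged in Lemma~\ref{lem:fivelemma}, whose proof encodes the telescoping trick $(x\;y\;z) = p_2 p_1 p_2 p_1$ with involutions. The Induction Lemma then automates the lifting to arbitrarily many control wires by the standard ``reuse a spare wire'' argument established earlier in Lemma~\ref{lem:ExtraWireLemma}.
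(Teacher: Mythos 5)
Your proposal is correct and follows exactly the paper's own proof: the paper's proof of this corollary is simply ``Apply Lemma~\ref{lem:InductionLemma} with $k=1$,'' where the hypothesis $CP(2,P)\subseteq\clone{CP(1,P)}$ is supplied by Lemma~\ref{lem:fivelemma} and the conclusion is specialized to $n=1$. Nothing is missing.
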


\begin{proof}
Apply Lemma~\ref{lem:InductionLemma} with $k=1$.
\qed
\end{proof}

\subsection{The alternating and full \what{}s}

Assuming that $|A| > 1$, the set $X=A^3$ contains at least five elements.
For $P=\{(x \; y \; z) \;|\; x,y,z \in A^3\mbox{ all distinct}\}\subseteq B_3(A)$ we then have, by Corollary~\ref{cor:threecycles}, that
$\clone{CP(m,P)} \subseteq \clone{CP(1,P)}$ for all $m\geq 1$.

Recall that $P_3=\{(ab \; ac \; db) \;|\; a,b,c,d \in A\} \subseteq B_2(A)$ is $n$-control-universal for the alternating \what{} $\Cthree$, for $n\geq 2$ (Corollary~\ref{cor:corollaryP3}).
Clearly $\cp{1}{P_3}\subseteq P \subseteq \clone{\cp{0}{P}}$, so by Lemma~\ref{lem:ExtraWireLemma}, for any $m\geq 1$,
$$
\cp{m+1}{P_3} \subseteq \clone{\cp{m}{P}} \subseteq \clone{\cp{1}{P}}.
$$
Hence $\Cthree\cap B_{m+3}(A) = \gen{CP(m+1,P_3)} \subseteq \clone{\cp{1}{P}}$.
We conclude that $\clone{\cp{1}{P}}$ contains all permutations of $\Cthree$ except the ones in $B_1(A), B_2(A)$ and $B_3(A)$. We have proved the following theorem.

\begin{theorem}
\label{thm:AltFiniteGen}
The alternating \what{} $\Cthree$ is finitely generated. Even permutations of $A^4$ generate all even permutations of $A^n$ for all $n\geq 4$.
%\qed
\end{theorem}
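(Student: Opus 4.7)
My plan is to combine three results from the preceding sections: the control-universality of $P_3$ for $\Cthree$ (Corollary~\ref{cor:corollaryP3}), the arity reduction granted by Corollary~\ref{cor:threecycles}, and the Extra Wire Lemma (Lemma~\ref{lem:ExtraWireLemma}). The goal is to exhibit a single finite subset of $B_4(A)$ whose revital closure contains every alternating gate of arity at least four; the theorem will then follow essentially at once.

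Concretely, I would set $X=A^3$ and let $P\subseteq B_3(A)$ consist of all $3$-word-cycles on $A^3$. Since we may assume $|A|\geq 2$, we have $|X|\geq 8\geq 5$, so Corollary~\ref{cor:threecycles} applies and yields $\clone{CP(m,P)}\subseteq \clone{CP(1,P)}$ for every $m\geq 1$. Next I would observe that a $1$-controlled $P_3$-permutation, together with its rewirings, moves exactly three words of $A^3$ and hence is itself a $3$-word-cycle on $A^3$; in other words, $CP(1,P_3)\subseteq P = CP(0,P)$. Applying Lemma~\ref{lem:ExtraWireLemma} with $Q=P_3$, $h=1$, $k=0$ then lifts this inclusion to $CP(m+1,P_3)\subseteq \clone{CP(m,P)}$ for every $m\geq 0$.

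Chaining the two inclusions yields $CP(m+1,P_3)\subseteq \clone{CP(1,P)}$ for every $m\geq 1$, and Corollary~\ref{cor:corollaryP3} gives $\gen{CP(m+1,P_3)}=\Cthree\cap B_{m+3}(A)$. Hence $\Cthree\cap B_n(A)\subseteq \clone{CP(1,P)}$ for every $n\geq 4$. Since $CP(1,P)$ is a finite subset of $B_4(A)\cap \Cthree$ consisting of even permutations of $A^4$, this proves the second sentence of the theorem. The first sentence follows by appending to $CP(1,P)$ any finite generating sets for the finite groups $\Alt(A)$, $\Alt(A^2)$ and $\Alt(A^3)$, yielding altogether a finite generating set for $\Cthree$.

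The proof is essentially bookkeeping once the Induction Lemma is in hand; the main thing to verify carefully is the arity matching when invoking Lemma~\ref{lem:ExtraWireLemma}, since $P_3\subseteq B_2(A)$ while $P\subseteq B_3(A)$, so that the inclusion $CP(1,P_3)\subseteq CP(0,P)$ lives in $B_3(A)$ and shifts uniformly under the lemma to land in $B_{m+3}(A)$ on both sides.
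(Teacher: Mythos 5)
Your proposal is correct and takes essentially the same route as the paper's proof: the same choice of $P$ (all $3$-word-cycles on $X=A^3$), the inclusion $CP(1,P_3)\subseteq\clone{CP(0,P)}$ combined with Corollary~\ref{cor:threecycles} and Lemma~\ref{lem:ExtraWireLemma}, and then control-universality of $P_3$ (Corollary~\ref{cor:corollaryP3}) to capture $\Cthree\cap B_n(A)$ for $n\geq 4$. Your explicit completion at arities $\leq 3$ by adjoining generators of $\Alt(A^k)$ for $k\leq 3$ is exactly the step the paper leaves implicit.
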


\begin{corollary}[\cite{selinger16,boykett15}]
\label{cor:FullFiniteGen}
Let $|A|$ be odd. Then the full \what{} $\Cone$ is finitely generated.
The permutations of $A^4$ generate all permutations of $A^n$ for all $n\geq 4$.
\end{corollary}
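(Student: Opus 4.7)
The plan is to combine Theorem~\ref{thm:AltFiniteGen} with a single odd permutation at each arity $n\geq 4$. Since $[\Sym(A^n):\Alt(A^n)]=2$, any set of the form $\Alt(A^n)\cup\{\tau\}$ with $\tau$ an odd permutation of $A^n$ generates $\Sym(A^n)$ as a group. By Theorem~\ref{thm:AltFiniteGen} we already have $\Alt(A^n)\subseteq\clone{\Alt(A^4)}\subseteq\clone{\Sym(A^4)}$ for every $n\geq 4$, so it will suffice to exhibit, for each such $n$, at least one odd permutation of $A^n$ inside $\clone{\Sym(A^4)}$.

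The natural candidate is the extension of a symbol swap. I would pick distinct $a,b\in A$ and let $\sigma=(a\;b)\in B_1(A)$. Its extension $\sigma_4=\sigma\oplus\id_3\in B_4(A)$ is a permutation of $A^4$, so $\sigma_4\in\Sym(A^4)$; consequently its further extension $\sigma_n\in B_n(A)$ (acting as $\sigma$ on one coordinate and as the identity on the remaining $n-1$) lies in $\clone{\Sym(A^4)}$ for every $n\geq 4$. Enumerating the $(n-1)$-tuples in the untouched coordinates, $\sigma_n$ decomposes as a product of exactly $|A|^{n-1}$ disjoint transpositions of words in $A^n$.

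The crucial point, and the place where the hypothesis is used, is the sign: since $|A|$ is odd, $|A|^{n-1}$ is odd for every $n\geq 1$, so $\sigma_n$ is an odd permutation of $A^n$. Combining this with $\Alt(A^n)\subseteq\clone{\Sym(A^4)}$ yields $\Sym(A^n)=\Cone\cap B_n(A)\subseteq\clone{\Sym(A^4)}$ for all $n\geq 4$, which is the refined claim. The first sentence of the corollary then follows by adjoining the finite groups $\Sym(A^1)$, $\Sym(A^2)$, $\Sym(A^3)$ to a generating set of $\Sym(A^4)$, since gates of smaller arity are not produced from higher-arity ones via extensions. The only step that truly requires thought is the parity count; everything else is a direct appeal to Theorem~\ref{thm:AltFiniteGen} and standard group theory.
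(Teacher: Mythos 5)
Your proof is correct and follows essentially the same route as the paper: Theorem~\ref{thm:AltFiniteGen} supplies all even permutations of $A^n$, and one then adjoins a single odd permutation obtained by extending a gate of $A^4$ and counting disjoint transpositions, whose number is odd because $|A|$ is odd. The only (immaterial) difference is your choice of odd witness — an extended symbol swap giving $|A|^{n-1}$ transpositions, versus the paper's extended word swap $(0000\;1000)$ giving $|A|^{n-4}$ transpositions.
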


\begin{proof}
Let $|A|>1$ be odd.
Let $P$ be the set of all permutations of $A^4$, and let $n\geq 4$. By Theorem~\ref{thm:AltFiniteGen},
the closure $\clone{P}$ contains all even permutations of $A^n$. The set
$P$ also contains an odd permutation $f$, say the word swap $(0000\; 1000)$.
Consider
$\pi=f\oplus \id_{n-4}\in B_n(A)$  that applies the swap $f$ on the first
four input symbols and keeps the others unchanged.
This $\pi$ is an odd permutation because it consists of $|A|^{m-4}$ disjoint swaps and $|A|$ is odd.
Because $\clone{P}\cap B_n(A)$ contains all even permutations of $A^n$ and an odd one, it contains all permutations.
\qed
\end{proof}

Recall that if
a circuit implements the permutation $f\oplus\id_k\in B_{n+k}(A)$,
we say it implements $f \in B_n(A)$ using $k$
borrowed bits.

\begin{corollary}
\label{cor:borrowedcor}
The \what{} $\Cone$ is finitely generated using at most one borrowed bit.
%Any gate set that generates all permutations of $A^n$ for all $n\leq 4$ using at most one borrowed bit, generates all permutations
%using at most borrowed bit.
\end{corollary}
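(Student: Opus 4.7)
The plan is to split on the parity of $|A|$. When $|A|$ is odd, Corollary~\ref{cor:FullFiniteGen} already supplies a finite generating set for $\Cone$ that requires no borrowed bits at all, so that case is immediate.

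So I would focus on the case $|A|$ even. The plan is to take $P$ to consist of the finite generating set for the alternating \what{} $\Cthree$ furnished by Theorem~\ref{thm:AltFiniteGen}, augmented with every element of $B_1(A)\cup B_2(A)\cup B_3(A)$; since each $B_n(A)$ is finite this augmentation keeps $P$ finite. The crucial observation is then a parity calculation: for any $f\in B_n(A)$, the permutation $f\oplus \id_1\in B_{n+1}(A)$ decomposes as a disjoint union of $|A|$ copies of $f$, one on each slice $A^n\times\{a\}$ of $A^{n+1}$, so
\[
\mathrm{sgn}(f\oplus \id_1) = \mathrm{sgn}(f)^{|A|} = 1
\]
whenever $|A|$ is even. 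Hence $f\oplus\id_1$ is an even permutation for every $f\in B_n(A)$ in this case.

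With this observation in place, the rest is routine bookkeeping. For $f\in B_n(A)$ with $n\leq 3$, the permutation $f$ itself lies in $P$ and no borrowed bit is needed. For $n\geq 4$, if $f$ is already even then $f\in\clone{P}$ by Theorem~\ref{thm:AltFiniteGen}, again using no borrowed bit. If instead $f$ is odd, the parity identity above gives $f\oplus\id_1\in \Cthree\cap B_{n+1}(A)\subseteq\clone{P}$, so $f$ is implemented from $P$ using exactly one borrowed bit. In every case at most one borrowed bit is used, which is the desired conclusion.

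I do not anticipate a substantial obstacle: once the identity $\mathrm{sgn}(f\oplus\id_1)=\mathrm{sgn}(f)^{|A|}$ is noted, the corollary reduces immediately to Theorem~\ref{thm:AltFiniteGen} together with the trivial step of absorbing the finitely many gates of arity at most $3$ into the generating set.
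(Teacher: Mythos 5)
Your proof is correct and follows essentially the same route as the paper: split on the parity of $|A|$, use Corollary~\ref{cor:FullFiniteGen} in the odd case, and in the even case observe that $f\oplus\id_1$ is always an even permutation (being $|A|$ disjoint copies of $f$) so Theorem~\ref{thm:AltFiniteGen} applies. Your extra step of adjoining all gates of arity at most $3$ to the generating set is harmless but unnecessary, since $f\oplus\id_1$ is even for small arities too and one borrowed bit is permitted there as well.
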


\begin{proof}
For $|A|$ odd the claim follows from Corollary~\ref{cor:FullFiniteGen}.
When $A$ is even then the permutations $f\oplus\id$ with one borrowed bit are all even, so the claim follows
from Theorem~\ref{thm:AltFiniteGen}.
\qed
\end{proof}

\subsection{The alternating conservative \what{}}
\label{sec:altcons}

%Let us consider the conservative case next.
Every non-trivial weight class of $A^5$ contains at least five elements.
(The trivial weight-classes are the singletons $\{a^5\}$ for $a\in A$.) For every non-trivial weight class $X$ we set
$P_X=\{(x \; y \; z) \;|\; x,y,z \in X\}\subseteq B_5(A)$ for the 3-word-cycles in $X$. By Corollary
\ref{cor:threecycles} we know that
$\clone{CP(m,P_X)} \subseteq \clone{CP(1,P_X)}$ for all $m\geq 1$. Let $P$ be the union of $P_X$ over all non-trivial weight classes $X$.
Then, because $\clone{\cdot}$ is a closure operator,  also $\clone{CP(m,P)} \subseteq \clone{CP(1,P)}$ for all $m\geq 1$.

By Corollary~\ref{cor:Something}, the set  $P_4\subseteq B_3(A)$
of $3$-cycles within weight classes
is $n$-control-universal for the alternating conservative
\what{} $\Cfour$, for all $n\geq 3$.
Let $m\geq 1$.
Because $\cp{2}{P_4}\subseteq P \subseteq \clone{\cp{0}{P}}$, by Lemma~\ref{lem:ExtraWireLemma} we have
$$
\cp{m+2}{P_4} \subseteq \clone{\cp{m}{P}} \subseteq \clone{\cp{1}{P}}.
$$
Hence $\Cfour\cap B_{m+5}(A) = \gen{CP(m+2,P_4)} \subseteq \clone{\cp{1}{P}}$.
We conclude that $\clone{\cp{1}{P}}$ contains all permutations of $\Cfour$
except possibly the ones in
$B_\ell(A)$ for $\ell\leq 5$.

\begin{theorem}
The alternating conservative \what{} $\Cfour$ is finitely generated. A gate set generates the whole $\Cfour$ if it generates,
for all $n\leq 6$,
the conservative permutations of $A^n$ that are even on all weight classes.
\qed
\end{theorem}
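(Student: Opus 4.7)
The approach is a direct instantiation of the strategy summarised after the Induction Lemma: pick a control-universal set, use Corollary~\ref{cor:threecycles} to collapse all control arities into a single one, and bootstrap using Lemma~\ref{lem:ExtraWireLemma}. Concretely, let $P = \bigcup_X P_X \subseteq B_5(A)$, where $X$ ranges over the \emph{non-trivial} weight classes of $A^5$ (the trivial ones being the singletons $\{a^5\}$) and $P_X$ denotes the set of all $3$-word-cycles on $X$. Every non-trivial weight class in $A^5$ contains at least five elements, so Corollary~\ref{cor:threecycles} applies to each $P_X$, and since $\clone{\cdot}$ is a closure operator we obtain $\clone{CP(m,P)} \subseteq \clone{CP(1,P)}$ for every $m\geq 1$.

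Next I connect this to the set $P_4 \subseteq B_3(A)$, which by Corollary~\ref{cor:Something} is $n$-control-universal for $\Cfour$ for all $n \geq 3$. The key verification is the inclusion $CP(2,P_4) \subseteq P$: any $2$-controlled $P_4$-permutation (up to rewiring) cyclically permutes three words $wx, wy, wz \in A^5$ where $w \in A^2$ is fixed and $x,y,z \in A^3$ share a weight; then $wx, wy, wz$ share their weight in $A^5$, so the gate is a $3$-word-cycle on a non-trivial weight class and hence lies in $P$. Applying Lemma~\ref{lem:ExtraWireLemma} with $h=2$, $k=0$, $Q = P_4$ now yields $CP(m+2,P_4) \subseteq \clone{CP(m,P)}$ for all $m \geq 0$, which combined with the previous paragraph gives, for every $m \geq 1$,
\[
\Cfour \cap B_{m+5}(A) \;=\; \gen{CP(m+2,P_4)} \;\subseteq\; \clone{CP(1,P)}.
\]
Thus $\clone{CP(1,P)}$ already contains $\Cfour \cap B_n(A)$ for every $n \geq 6$.

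To finish, let $G$ be any gate set such that, for every $n \leq 6$, the closure $\clone{G}$ contains $\Cfour \cap B_n(A)$. Every element of $P$ is an even conservative permutation on $A^5$, and every element of $CP(1,P)$ is an even conservative permutation on $A^6$, so $CP(1,P) \subseteq \Cfour \cap B_6(A) \subseteq \clone{G}$; hence $\clone{CP(1,P)} \subseteq \clone{G}$, which together with the hypothesis covers $\Cfour \cap B_n(A)$ for all $n \geq 1$. For the finite-generation assertion, each of the finitely many groups $\Cfour \cap B_n(A)$ with $n \leq 6$ is finite, so the union of these six sets is a finite generating set for $\Cfour$.

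\textbf{Main obstacle.} All of the heavy lifting sits in Corollaries~\ref{cor:threecycles} and \ref{cor:Something} and Lemma~\ref{lem:ExtraWireLemma}, so the only genuinely new point to verify is the identification $CP(2,P_4) \subseteq P$, together with the bookkeeping that forces us to add the low-arity cases by hand: the closure of $CP(1,P)$ consists of gates of arity at least $6$, and the threshold $n \leq 6$ in the hypothesis is exactly what is needed to both generate $CP(1,P)$ itself and to fill in the arities missed by the inductive argument.
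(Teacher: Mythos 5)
Your proposal is correct and follows essentially the same route as the paper: the same set $P$ of $3$-word-cycles inside non-trivial weight classes of $A^5$, Corollary~\ref{cor:threecycles} to collapse control arities, the inclusion $\cp{2}{P_4}\subseteq P$ together with Lemma~\ref{lem:ExtraWireLemma} and Corollary~\ref{cor:Something} to capture $\Cfour\cap B_n(A)$ for $n\geq 6$, and the low-arity layers $n\leq 6$ added by hand. You merely spell out more explicitly than the paper why $\cp{2}{P_4}\subseteq P$ and why $CP(1,P)\subseteq \Cfour\cap B_6(A)$, which is fine.
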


{\new

\subsection{The generalized alternating conservative case}

Let $\phi:A^*\rightarrow M$ be a homomorphism to a commutative monoid $M$. Analogously to the standard alternating conservative case,
we have the following:

\begin{theorem}
\label{thm:altphicons}
The alternating $\phi$-conservative \what{} $\Cfourphi$ is finitely generated.
\end{theorem}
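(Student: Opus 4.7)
I would mimic the proof in Section~\ref{sec:altcons} for the standard alternating conservative case. The plan is to first invoke Corollary~\ref{cor:AlternatingGeneralizedConservativeConnectedComponents} to fix an integer $m_0$ for which $P^{\phi,m_0}_4\subseteq B_{m_0}(A)$ is $n$-control-universal for $\Cfourphi$ for every $n\geq m_0$, then pick a slightly larger parameter $m'\geq\max(m_0+1,5)$, and finally, for each weight class $W$ of $A^{m'}$ with $|W|\geq 5$, take $P_W=\{(x\;y\;z)\mid x,y,z\in W\}$ and set $P=\bigcup_W P_W$, a finite subset of $B_{m'}(A)$. Corollary~\ref{cor:threecycles} applied inside each such $W$ will give $\clone{CP(k,P)}\subseteq\clone{CP(1,P)}$ for every $k\geq 1$, since each $k$-controlled $3$-cycle drawn from $P$ acts inside a single weight class of size at least five.

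The crux of the argument will be to establish $CP(m'-m_0,P^{\phi,m_0}_4)\subseteq P$. A generic element of the left side is a $3$-cycle $(wu_1\;wu_2\;wu_3)$ on $A^{m'}$, where $w\in A^{m'-m_0}$ is a common control prefix and $u_1,u_2,u_3\in A^{m_0}$ are distinct words with $\phi(u_1)=\phi(u_2)=\phi(u_3)$. The three words already lie in a common weight class $W\subseteq A^{m'}$; the step that I expect to need the most care will be verifying $|W|\geq 5$. I would observe that at most one of $wu_1,wu_2,wu_3$ can equal a constant word $a^{m'}$, because such an equality forces $w=a^{m'-m_0}$, and the condition $m'-m_0\geq 1$ pins $a$ down uniquely from $w$. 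Hence at least two of the three are non-constant, and since $M$ is commutative every permutation of a non-constant word of length $m'$ lies in $W$, giving $|W|\geq m'\geq 5$. The $3$-cycle therefore sits inside $P_W\subseteq P$.

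With this containment in hand, Lemma~\ref{lem:ExtraWireLemma} applied with $h=m'-m_0$, $k=0$ and $Q=P^{\phi,m_0}_4$ will yield $CP(m'-m_0+j,P^{\phi,m_0}_4)\subseteq\clone{CP(j,P)}$ for every $j\geq 0$, and for $j\geq 1$ the right-hand side is contained in $\clone{CP(1,P)}$ by the first paragraph. Combined with the control-universality of $P^{\phi,m_0}_4$, this will give $\Cfourphi\cap B_n(A)=\gen{CP(n-m_0,P^{\phi,m_0}_4)}\subseteq\clone{CP(1,P)}$ for every $n\geq m'+1$. Since $\Cfourphi\cap B_n(A)$ is a finite group for each $n\leq m'$, adjoining to $CP(1,P)$ a finite generating set for each of these finitely many low-arity groups will produce a finite generating set for all of $\Cfourphi$.
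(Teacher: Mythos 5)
Your proposal is correct and follows essentially the same route as the paper's proof: control-universality of $P^{\phi,m}_4$ (Corollary~\ref{cor:AlternatingGeneralizedConservativeConnectedComponents}), $3$-word cycles inside weight classes at a slightly larger arity, a five-element bound on the relevant weight classes, and then Corollary~\ref{cor:threecycles} together with Lemma~\ref{lem:ExtraWireLemma}. The only cosmetic differences are that you prove $|W|\geq 5$ directly for the classes containing the controlled cycles (at most one of the three words is constant, then count anagrams using commutativity of $M$), whereas the paper gets the same bound for all non-trivial weight classes via Lemma~\ref{lem:AlternatingGeneralizedConservativeConnectedComponents}, and that you make explicit the low-arity bookkeeping the paper leaves implicit.
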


\begin{proof}
Fix $m$ to be sufficiently large so that the conclusions of
Lemma~\ref{lem:AlternatingGeneralizedConservativeConnectedComponents} and
Corollary~\ref{cor:AlternatingGeneralizedConservativeConnectedComponents} hold, and let $n=\max(5, m+1)$.

Consider a non-trivial weight class $X\subseteq A^n$ under $\phi$. Let us prove
that $X$ contains at least five elements.
By Lemma~\ref{lem:AlternatingGeneralizedConservativeConnectedComponents} weight class
$X$ contains the vertices of some hyperedge of $G^{(4),\phi,m}_{A,n}$.
These hyperedges have form $\{uxv, uyv, uzv\}$ for distinct
$x,y,z\in A^m$. Because $n\geq m+1$, either $u$ or $v$ is a non-empty word. It follows that
$X$ contains a word with at least two distinct letters. Any letter permutation of such a word is also in
$X$, and because $n\geq 5$ there are at least five distinct such permutations, proving $|X|\geq 5$.

Next we apply Corollary~\ref{cor:threecycles} as in Section~\ref{sec:altcons} above.
Let
\[ P=\{(x \; y \; z) \;|\;
 x,y,z\in A^n \mbox{ and } \phi(x)=\phi(y)=\phi(z)\} \subseteq B_n(A) \]
be the set of $3$-word
 cycles inside the weight classes of $A^n$. Because all non-trivial weight classes
contain at least five elements we get from Corollary~\ref{cor:threecycles}
that $\clone{CP(k,P)} \subseteq \clone{CP(1,P)}$ for all $k\geq 1$.

Consider $P^{\phi,m}_4\subseteq B_m(A)$, the set of $3$-word cycles within 
the weight classes of $A^m$.
 Because $CP(n-m,P^{\phi,m}_4)\subseteq P \subseteq \clone{CP(0,P)}$,
we get from Lemma~\ref{lem:ExtraWireLemma} that for all $k\geq 1$
$$
\cp{n-m+k}{P^{\phi,m}_4} \subseteq \clone{\cp{k}{P}} \subseteq \clone{\cp{1}{P}}.
$$

By Corollary~\ref{cor:AlternatingGeneralizedConservativeConnectedComponents} the set $P^{\phi,m}_4$ is
$n+k$-control-universal for $\Cfourphi$.
Hence $\Cfourphi\cap B_{n+k}(A) = \gen{\cp{n-m+k}{P^{\phi,m}_4}} \subseteq \clone{\cp{1}{P}}$.
We conclude that $\clone{\cp{1}{P}}$ contains all permutations of $\Cfourphi$
except possibly the ones in
$B_\ell(A)$ for $\ell \leq n$.

\qed
\end{proof}
}

\section{Non-finitely generated \what{}s}

It is well known that the full \what{} is not finitely generated over even alphabets. The reason is that any permutation
$f\in B_n(A)$ can only compute even permutations on $A^{m}$ for $m>n$.

\begin{theorem}[\cite{toff80}]
For even $|A|$, the full \what{} $\Cone$ is not finitely generated.
\end{theorem}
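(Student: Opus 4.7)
My plan is a parity argument: I will show that for any finite $P \subseteq B(A)$, the closure $\clone{P}$ contains only even permutations in sufficiently high arities, so it cannot equal $\Cone$ since every $\Sym(A^m)$ with $|A|^m \geq 2$ contains odd permutations (e.g.\ any word transposition $(u\;v)$ with $u \neq v$).

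Concretely, I would suppose for contradiction that $\Cone = \clone{P}$ for some finite $P$, and let $N = \max\{\ell : P \cap B_\ell(A) \neq \emptyset\}$. Then I fix any $m > N$. Every element of $\clone{P} \cap B_m(A)$ is a composition of $m$-ary extensions of gates from $\{\id_1\} \cup P$, each of arity $\ell \leq N < m$; by definition such an extension is a rewiring $\pi_\alpha \circ (f \oplus \id_{m-\ell}) \circ \pi_\alpha^{-1}$ with $\ell < m$. Since conjugation preserves the sign of a permutation, it is enough to compute the sign of $f \oplus \id_{m-\ell}$.

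The key observation is that $f \oplus \id_{m-\ell}$ applies $f$ independently within each fiber of the projection $A^m \to A^{m-\ell}$ onto the last $m-\ell$ coordinates, so its cycle decomposition consists of exactly $|A|^{m-\ell}$ disjoint copies of the cycle decomposition of $f$. Because $|A|$ is even and $m - \ell \geq 1$, the multiplicity $|A|^{m-\ell}$ is even, so every cycle of $f$ is replicated an even number of times and contributes $+1$ to the sign, regardless of the sign of $f$ itself. Hence every $m$-ary generator available for $\clone{P}$ is an even permutation of $A^m$.

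Compositions of even permutations are even, so $\clone{P} \cap B_m(A) \subseteq \Alt(A^m) \subsetneq \Sym(A^m)$, which contradicts $\clone{P} = \Cone$ and completes the proof. The only subtle step is the parity calculation; its validity hinges precisely on $|A|$ being even, since for odd $|A|$ the multiplier $|A|^{m-\ell}$ remains odd and the argument collapses, consistently with Corollary~\ref{cor:FullFiniteGen}. Otherwise the proof is short and entirely self-contained.
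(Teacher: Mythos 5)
Your proof is correct and is essentially the paper's argument: the paper attributes the result to Toffoli and only sketches the reason, namely that any extension $f\oplus\id_{m-\ell}$ with $m>\ell$ replicates the cycle structure of $f$ exactly $|A|^{m-\ell}$ times and is therefore an even permutation when $|A|$ is even (the same parity computation the paper uses, with the opposite conclusion, in the proof of Corollary~\ref{cor:FullFiniteGen} for odd $|A|$). Your write-up just makes this explicit, including the standard observation that every $m$-ary element of $\clone{P}$ is a composition of $m$-ary extensions of the finitely many generators, so there is nothing to add.
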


By another parity argument we can also show that the conservative \what{} $\Ctwo$ is not finitely generated on any
non-trivial alphabet, not even if infinitely many borrowed bits are available.
This generalizes a result in~\cite{xu15} on binary alphabets. Our proof is based on the same parity sequences as the one
in~\cite{xu15}, where these sequences are computed concretely for generalized Fredkin gates. However, our observation only relies on the (necessarily) low rank of a finitely-generated group of such parity sequences, and the particular conserved quantity is not as important.

%Let $A=\{a_1,a_2,\dots ,a_s\}$, and let $Q=\N^s$. Each $q=(q_1,q_2,\dots, q_s)\in Q$ identifies the weight class $W_q$ of $A^n$
%for $n=q_1+\dots + q_s$ containing $a_1^{q_1}a_2^{q_2}\dots a_s^{q_s}$ and all its letter permutations.

Let $n\in\N$, and let $\W$ be the family of the weight classes of $A^n$.
For any  $f\in \Ctwo\cap B_n(A)$ and any weight class $X\in \W$,
the restriction $f|_{X}$ of $f$ on the weight class $X$ is a permutation of $X$.
Let $\psi(f)_X\in \Z_2$ be its parity. Clearly, $\psi(f\circ g)_X=\psi(f)_X+\psi(g)_X$ modulo two,
so $\psi$ defines a group homomorphism from $\Ctwo\cap B_n(A)$ to the additive abelian group $(\Z_2)^{\W}$.
The image $\psi(f)$ that records all $\psi(f)_X$ for all $X\in \W$ is the \emph{parity sequence} of $f$.
Because each element of the commutative group $(\Z_2)^{\W}$ is an involution, it follows that the subgroup generated by
any $k$ elements has cardinality at most $2^k$.

Consider then a function $f\in \Ctwo\cap B_{\ell}(A)$ for $\ell\leq n$.
Its application $f_n = f\oplus\id_{n-\ell} \in B_n(A)$
on length $n$ inputs is conservative, so it has the associated parity sequence $\psi(f')$, which we
denote by $\psi_n(f)$. Note that any conjugate $\pi_\alpha\circ f\circ \pi_\alpha^{-1}$ of $f$ by a wire permutation $\pi_\alpha$ has the same parity sequence, so the parity sequence
does not depend on which input wires we apply $f$ on.

Let $f^{(1)}, f^{(2)}, \dots , f^{(m)}\in \Ctwo$  be a finite generator set, and
let us denote by $C\subseteq \Ctwo$ the \what{} they generate.
Let $n \geq 2$ be larger than the arity of any $f^{(i)}$. Then $C \cap B_n(A)$
is the group generated by the applications $f^{(1)}_n, f^{(2)}_n, \dots , f^{(m)}_n$
of the generators on length $n$ inputs, up to conjugation by wire permutations.
We conclude that there are at most $2^m$ different parity sequences on $C\cap B_n(A)$, for all sufficiently large $n$. We have proved the following lemma.

% and the permutations $\pi_{(1\dots n)}$ of the wires.
%The wire permutations, in turn, are generated by
%transpositions $\pi_{i,j}$ of wires $i$ and $j$, which in turn are
%conjugates of a single transposition $\pi=\pi_{1,2}$ that swaps the first and the second wire.
%We conclude that there are at most $2^{m+1}$ different parity sequences on
%$C\cap B_n(A)$, for all sufficiently large $n$. We have proved the following lemma.

\begin{lemma}
\label{lem:paritysequencelemma}
Let $C$ be a finitely generated sub\what{} of $\Ctwo$. Then there exists a constant $N$ such that, for all $n$,
the elements of $C\cap B_n(A)$ have at most $N$ different parity sequences.
\end{lemma}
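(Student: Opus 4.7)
The plan is to take $N = \max(2^m, |B_L(A)|)$ where $\{f^{(1)}, \ldots, f^{(m)}\}$ is a finite generating set of $C$ and $L$ is the maximum arity of the generators. For $n \leq L$ the bound $|C \cap B_n(A)| \leq |B_n(A)| \leq N$ is immediate, so the content lies in the case $n > L$.

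For the main case, I would assemble three observations. First, the parity map $\psi : \Ctwo \cap B_n(A) \to (\Z_2)^{\W}$ is a group homomorphism: for any conservative $f, g \in B_n(A)$ and any weight class $X \in \W$, the restriction $(f \circ g)|_X$ equals $f|_X \circ g|_X$ in $\Sym(X)$, and the parity of a product of permutations is the sum of their parities modulo $2$.

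Second, $\psi$ is constant on wire-conjugacy orbits. Because wire permutations preserve the multiset of letters of every word, each $\pi_\alpha$ maps every weight class bijectively to itself. Hence for any $f \in \Ctwo \cap B_n(A)$, the restriction $(\pi_\alpha \circ f \circ \pi_\alpha^{-1})|_X$ is conjugate inside $\Sym(X)$ to $f|_X$, so it has the same parity. This is the structurally crucial step.

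Third, I would combine the above with the definition of the generated revital. For $n > L$, every element of $C \cap B_n(A)$ is, by the revital closure operations, a composition of extensions of the generators, where each extension of $f^{(i)}$ is a rewiring of $f^{(i)}_n := f^{(i)} \oplus \id_{n - \ell_i}$ — in other words, a conjugate of $f^{(i)}_n$ by a wire permutation. Applying $\psi$ and invoking the two previous observations, $\psi(C \cap B_n(A))$ is the subgroup of $(\Z_2)^{\W}$ generated by the $m$ elements $\psi_n(f^{(i)})$. Since $(\Z_2)^{\W}$ is an elementary abelian $2$-group, this subgroup has order at most $2^m$, giving at most $2^m$ distinct parity sequences in $C \cap B_n(A)$. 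The main obstacle is the bookkeeping of the second step — confirming that rewirings genuinely act trivially on parity sequences — but this reduces to the easy fact that wire permutations stabilize each weight class setwise.
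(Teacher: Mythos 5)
Your proposal is correct and follows essentially the same route as the paper: the parity-sequence map as a homomorphism into the elementary abelian $2$-group $(\Z_2)^{\W}$, invariance of parity sequences under rewiring since wire permutations stabilize each weight class, and the normal form of elements of $C\cap B_n(A)$ as compositions of rewired extensions of the generators, yielding the $2^m$ bound. Your explicit treatment of the finitely many small arities $n\leq L$ (and the justification of rewiring-invariance via conjugacy inside $\Sym(X)$) merely spells out details the paper leaves implicit.
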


Now we can prove the following negative result. Not only does it state
that no finite gate set generates the conservative \what{}, but even that there
necessarily remain conservative permutations that cannot be obtained
using any number of borrowed bits.

\begin{theorem}%[generalization of \cite{xu15}] <-- Xu does a lot of things (right?), and this may suggest that we're picking this as his only contribution, unless we mention the individual result that we generalize.
\label{thm:XuGeneralization}
Let $|A|>1$. The conservative \what{} $Cons(A)$ is not finitely generated. In fact, if
$C\subseteq Cons(A)$ is finitely generated then there exists $f\in
Cons(A)$ such that $f\oplus \id_k\not\in C$ for all $k=0,1,2,\dots$.
\end{theorem}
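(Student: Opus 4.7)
The plan is to combine Lemma~\ref{lem:paritysequencelemma} with linear algebra over $\Z_2$. Let $C\subseteq \Ctwo$ be finitely generated by gates of maximum arity $L$, and let $N=2^m$ be the uniform bound from the lemma; then for every $n$ the parity-sequence group $V_n:=\psi(C\cap B_n(A))$ is a subgroup of $(\Z_2)^{\mathcal{W}_n}$ of $\Z_2$-dimension at most $m$, writing $\mathcal{W}_n$ for the family of weight classes of $A^n$. The desired $f$ will be a single word swap $(u\;v)$ inside a carefully chosen non-trivial weight class $X_\lambda$ of $A^n$ for some large $n$.

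The core structural input is that extending by $\id_k$ acts linearly on parity sequences: viewing $f\oplus\id_k$ as a disjoint union of permutations on the slices $\{xw\colon x\in A^n\}$ for $w\in A^k$ gives $\psi(f\oplus\id_k)=T^k(\psi(f))$, where $T^k\colon(\Z_2)^{\mathcal{W}_n}\to (\Z_2)^{\mathcal{W}_{n+k}}$ is the linear map
$$
T^k(e_\lambda)=\sum_{\nu\in\mathcal{W}_k}\binom{k}{\nu}\,e_{\lambda+\nu}\pmod 2,
$$
with $e_\lambda$ the indicator of $\lambda$ and $\binom{k}{\nu}$ the multinomial coefficient. A multinomial Vandermonde identity yields $T^k\circ T^{k'}=T^{k+k'}$. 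Since $V_n$ is generated by the parity sequences of the $n$-ary extensions of the generators of $C$, and since wire permutations preserve parity sequences, one deduces $V_{n+k}=T^k(V_n)$ for every $n\ge L$ and every $k\ge 0$.

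The central technical step is to prove that each $T^k$ is injective. By the composition law this reduces to injectivity of $T^1$, for which $T^1(e_\lambda)=\sum_{a\in A}e_{\lambda+e_a}$. A null combination $\sum_\lambda c_\lambda T^1(e_\lambda)=0$ is equivalent to the system $\sum_{a\colon \mu_a\ge 1}c_{\mu-e_a}=0\pmod 2$ for every $\mu\in\mathcal{W}_{n+1}$. I would solve this by induction on $\max_a\lambda_a$ in decreasing order: letting $a^\ast$ attain the maximum coordinate of $\lambda$ and specializing to $\mu=\lambda+e_{a^\ast}$, the corresponding equation expresses $c_\lambda$ as a $\Z_2$-sum of coefficients $c_{\lambda'}$ for weights $\lambda'$ of strictly larger maximum coordinate, which by the induction hypothesis already vanish. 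This triangular argument, which relies on $|A|\ge 2$, is the main obstacle of the proof.

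Granted the injectivity, the vector $\Phi_\lambda^k:=T^k(e_\lambda)$ lies in $V_{n+k}=T^k(V_n)$ if and only if $e_\lambda\in V_n$, and this holds for at most $\dim V_n\le m$ weights $\lambda$. Since $|\mathcal{W}_n|$ grows without bound, we may choose $n\ge L$ with $|\mathcal{W}_n|>m+|A|$; then some non-trivial weight class $\lambda\in\mathcal{W}_n$ satisfies $e_\lambda\notin V_n$, and any word swap $f=(u\;v)$ with $u,v\in X_\lambda$ distinct is conservative with $\psi(f\oplus\id_k)=\Phi_\lambda^k\notin V_{n+k}$ for every $k\ge 0$. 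Hence $f\oplus\id_k\notin C$ for all $k$, proving the borrowed-bits claim; the non-finite-generation of $\Ctwo$ follows by taking $C$ to be any hypothetical finite generating set.
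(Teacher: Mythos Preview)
Your argument is correct and takes a genuinely different route from the paper's. Both proofs start from Lemma~\ref{lem:paritysequencelemma}, but diverge thereafter. The paper works concretely with only two letters $0,1\in A$ and the weight classes $X_i$ of words containing exactly $i$ ones: it observes directly that $f_i\oplus\id_k$ is a single transposition on $X_i^{(k)}$ and the identity on $X_j^{(k)}$ for every $j<i$, so the parity sequences of $f_1\oplus\id_k,\dots,f_{N+1}\oplus\id_k$ form a lower-triangular system and are distinct for every fixed $k$; a pigeonhole over $i$ then produces one index working for infinitely many $k$, and the monotonicity $f\oplus\id_k\in C\Rightarrow f\oplus\id_{k+1}\in C$ upgrades this to all $k$. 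You instead isolate the linear ``extension map'' $T^k$ on parity sequences, prove once that $T^1$ (hence every $T^k$) is injective via your triangular induction on $\max_a\lambda_a$, and combine this with $V_{n+k}=T^k(V_n)$ to select a single witness $e_\lambda\notin V_n$ that works uniformly for all $k$ at once. Your approach is more structural --- the injectivity of $T^1$ is a clean algebraic explanation of why borrowed bits cannot repair the parity obstruction, and it names $f$ directly rather than through pigeonhole plus monotonicity --- while the paper's approach is shorter and more elementary, needing no linear algebra beyond counting and no statement about general weight classes. One minor phrasing slip: in your last sentence, $C$ should be the revital $\Ctwo$ itself under the hypothesis that it is finitely generated, not ``a finite generating set''.
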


\begin{proof}
Let $0,1 \in A$ be distinct.
Let $C$ be a finitely generated sub\what{} of $\Ctwo$, and
let $N$ be the constant from
Lemma~\ref{lem:paritysequencelemma} for $C$. Let us fix $n\geq
N+2$. For each $i=1,2,\dots, N+1$,
consider the non-trivial weight classes $X_i$ containing the words of $A^n$ with $i$ letters 1 and $n-i$ letters 0.
For each $i$, let $f_i\in \Ctwo\cap B_n(A)$ be a permutation that swaps two elements of $X_i$,
keeping all other elements of $A^n$ unchanged. This $f_i$ is odd on $X_i$ and even on all other weight classes, so all $f_i$ have different parity sequences.
We conclude that
some $f_i$ is not in $C$.

For the second, stronger claim, we continue by considering
an arbitrary $k\in\N$. For $i=1,2,\dots , N+1$, let $X^{(k)}_i$ be the parity class
of $A^{n+k}$ containing the words with $i$ letters 1 and $n+k-i$ letters 0.
Note that  $f^{(k)}_i=f_i\oplus \id_k$ is a single word swap on $X^{(k)}_i$ and the identity map 
on all $X^{(k)}_j$ with $j<i$. This means that
the parity sequences of $f^{(k)}_1, f^{(k)}_2, \dots , f^{(k)}_{N+1}$ are all different, hence some $f^{(k)}_i$ is not in $C$.
But then, for some $i\in\{1,2,\dots, N+1\}$, there are infinitely many
$k\in\N$ with the property that $f^{(k)}_i=f_i\oplus \id_k$ is not in
$C$. This means that $f_i\oplus \id_k\not\in C$ for \emph{any} $k\in \N$
as $f_i\oplus \id_k\in C$ implies that $f_i\oplus \id_\ell\in C$ for
all $\ell>k$. The permutation $f=f_i$ has the claimed property.
\qed
\end{proof}

{\new

Let us consider next the more general case of the $\phi$-conservative \what{} $\Ctwophi$. The following example shows that
the second, stronger statement in Theorem~\ref{thm:XuGeneralization} concerning borrowed bits
does not hold for all choices of $\phi$.

\bigskip

\noindent
{\bf Example.}
Let $A=\{0,1,2,3\}$ and consider the morphism $\phi:A^*\longrightarrow \N^2$
defined by $\phi(0)=\phi(2)=(1,0)$ and $\phi(1)=\phi(3)=(0,1)$. Now $\phi(x)=\phi(y)$ if and only if
$x$ and $y$ contain equally many even letters, and equally many odd letters. Let $C=\Cfourphi$, a sub-revital
of $\Ctwophi$ that is  finitely generated by Theorem~\ref{thm:altphicons}.

Let $f\in \Ctwophi$ be arbitrary. Then $f\oplus \id$ is even on every $\phi$-weight class:
$x0$ and $x2$ are in the same weight class, as are $x1$ and $x3$. In each case $f\oplus \id$
only changes the word $x$ independent of the last symbol, so $f\oplus \id$ performs in each
weight class two copies of the same permutation. We conclude that $f\oplus \id\in C$.
This means that $\Ctwophi$ is finitely generated using one borrowed bit.
\xqed{$\triangle$}

\bigskip

However, we can prove that $\Ctwophi$ itself -- without borrowed bits -- is not finitely
generated if there are enough non-trivial weight classes.
Let us call $\phi:A^*\rightarrow M$ \emph{infinite-dimensional} if for all $m\in\N$ there exists $n\in\N$ such that
there are at least $m$ weight classes with more than one word in $A^n$. The homomorphism in the previous example is infinite dimensional. In contrast, if $M$ is finite then $\phi$ can not be infinite-dimensional.

\begin{theorem}
\label{thm:ConservedQuantityNonFG}
Let homomorphism $\phi:A^*\rightarrow M$ be infinite-dimensional.
Then revital $\Ctwophi$ is not finitely generated.
\end{theorem}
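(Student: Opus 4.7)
The plan is to adapt the parity sequence argument of Theorem~\ref{thm:XuGeneralization} to the $\phi$-conservative setting, using the infinite-dimensionality hypothesis to produce, for each candidate bound $N$, a witness $n$ with sufficiently many non-trivial weight classes.

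First I would generalize Lemma~\ref{lem:paritysequencelemma}. For each $n$, let $\W_n$ denote the set of weight classes of $A^n$ under $\phi$. Any $f\in \Ctwophi\cap B_n(A)$ permutes each weight class, so its restrictions have well-defined parities, yielding a group homomorphism $\psi_n:\Ctwophi\cap B_n(A)\rightarrow (\Z_2)^{\W_n}$. Exactly as in the proof preceding Lemma~\ref{lem:paritysequencelemma}, for $f\in B_\ell(A)$ with $\ell\leq n$ the extension $f\oplus \id_{n-\ell}$ still lies in $\Ctwophi$ (since $\phi$ is length-recording, weight classes of $A^n$ restrict along the last $n-\ell$ coordinates to weight classes of $A^\ell$), and the parity sequence is invariant under rewiring. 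Therefore a finitely generated sub-revital $C\subseteq \Ctwophi$ with $m$ generators produces at most $2^m$ distinct parity sequences in $C\cap B_n(A)$ for every sufficiently large $n$. This gives the desired uniform bound $N=N(C)$.

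Next I would invoke infinite-dimensionality. By hypothesis, for any $N$ there exists $n$ such that $A^n$ contains at least $N+1$ non-trivial weight classes $X_1,\dots,X_{N+1}$, each containing at least two words. For each $i$, pick two distinct words $u_i,v_i\in X_i$ and let $f_i\in \Ctwophi\cap B_n(A)$ be the word swap $(u_i\; v_i)$, which is the identity outside $X_i$. Then $\psi_n(f_i)$ is the indicator of $X_i$ in $(\Z_2)^{\W_n}$, so the sequences $\psi_n(f_1),\dots,\psi_n(f_{N+1})$ are $N+1$ distinct elements of $(\Z_2)^{\W_n}$.

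Finally, assume for contradiction that $\Ctwophi$ is finitely generated by some $C$, and let $N$ be the corresponding bound from the generalized parity lemma. Choosing $n$ as above with $n$ also larger than the arities of the generators, the $N+1$ distinct parity sequences $\psi_n(f_i)$ all belong to $C\cap B_n(A)$, contradicting the bound $N$. Hence $\Ctwophi$ cannot be finitely generated. The one mild obstacle is to justify carefully that parity sequences behave well under the extension operation $f\mapsto f\oplus \id_{n-\ell}$ in the $\phi$-conservative setting, but this is immediate from the fact that $\phi$ records length: extension preserves both weight classes and the cycle structure on each class (replicated $|A|^{n-\ell}$ times), so parities are preserved on the nose.
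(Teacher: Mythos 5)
Your proposal is correct and follows essentially the same route as the paper: generalize the parity-sequence bound of Lemma~\ref{lem:paritysequencelemma} to $\phi$-weight classes, then use infinite-dimensionality to find, for some $n$, at least $N+1$ non-trivial weight classes and take a single word swap in each, giving $N+1$ distinct parity sequences and a contradiction. One caveat: your closing claim that extension preserves parities ``on the nose'' is false in general --- the Example preceding Theorem~\ref{thm:ConservedQuantityNonFG} exhibits an $f$ that is odd on some weight class while $f\oplus\id$ is even on every class, precisely because a weight class of $A^n$ may split into an even number of replicated copies --- but this does not damage your argument, which only needs that $f\oplus\id_{n-\ell}$ is again $\phi$-conservative, that the parity sequence of a rewired extension of a generator is independent of the rewiring, and that the parity map is a homomorphism into the elementary abelian $2$-group $(\Z_2)^{\W}$.
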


The theorem shows, for example, that the \what{} of functions in $B(\{0,1,2\})$ that preserve the number of zeroes, and preserve the number of ones modulo $k$, is not finitely generated.

\begin{proof}
Suppose $\Ctwophi$ is finitely generated.
We define the parity sequence of $f\in \Ctwophi\cap B_n(A)$ analogously to the case of $\Ctwo$:
it records the parities of $f$ restricted to
$\phi$-weight classes of $A^n$.
Lemma~\ref{lem:paritysequencelemma} holds with an analogous proof:
there exists a constant $N$ such that, for all $n$, the elements of $\Ctwophi\cap B_n(A)$ have at most $N$ different parity sequences.
Because $\phi$ is infinite-dimensional, for some $n$ there are at least $N+1$ non-trivial
weight classes $X_1, X_2,\dots ,X_{N+1}$ in $A^n$. For $i\in\{1,2,\dots, N+1\}$, let $f_i\in B_n(A)$ swap two elements
of class $X_i$ and keep all other words unchanged. This $f_i$ is in $\Ctwophi$,
and it is odd on $X_i$ and even on all other weight classes. Thus there are $N+1$ functions $f_1,f_2,\dots , f_{N+1}$
in $\Ctwophi\cap B_n(A)$ with different
parity sequences, a contradiction.
\qed

\end{proof}

Note that the theorem is not valid without the infinite-dimensionality condition: For example the full revital
$\Cone$ is $\Ctwophi$ for the length homomorphism $\phi:x\mapsto |x|$, but it is finitely generated when $|A|$ is odd (Corollary~\ref{cor:FullFiniteGen}).

}

\section{Concrete generating families}
\label{secsearches}

We have found finite generating sets for \what{}s in the general and the conservative cases. Our generating sets are of the form `all controlled $3$-word cycles that are in the family', and the reader may wonder whether there are more natural gate families that generate these classes. Of course, by our results, there is an algorithm for checking whether a particular set of gates is a set of generators, and in this section we give some examples.

First, we observe that $CP(2,P_1)$ (that is, $2$-controlled symbol swaps) generate all permutations of $A^3$ and all even permutations of $A^n$ for all $n\geq 4$. Indeed, by Corollary~\ref{cor:P1} they generate $B_3(A)$, and by Figure~\ref{fig:tof} they generate $CP(2,P_3)$ (the 2-controlled $3$-cycles of length-two words). These in turn, by Corollary~\ref{cor:corollaryP3}, generate all even permutations of $A^4$
which is enough by Theorem~\ref{thm:AltFiniteGen} to get all even permutations on $A^n$ for $n\geq 4$.

It is easy to see that $CP(2,P_1)$ in turn is generated by all symbol swaps and the $w$-word-controlled symbol swaps for a single $w \in A^2$. In particular in the case of binary alphabets, we obtain that the alternating \what{} is generated by the Toffoli gate and the negation gate, which was also proved in \cite{xu15}.

\begin{figure}[ht]
\begin{center}
\includegraphics[width=11.5cm]{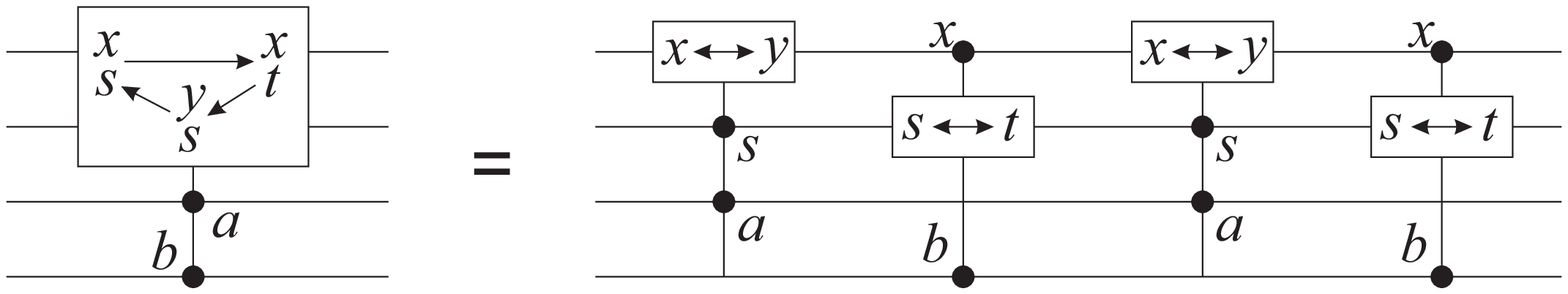}
\end{center}
\caption{A decomposition of the $ab$-controlled 3-cycle $(xs \; xt \; ys)$ into a composition of four 2-controlled swaps.}
\label{fig:tof}
\end{figure}

In the conservative binary case, the Fredkin gate is known to be universal (in the sense of auxiliary bits, see \cite{xu15}). The Fredkin gate is, due to the binary alphabet, both the unique 1-word-controlled wire swap and the unique nontrivial conservative 1-word-controlled word swap. The natural generalizations would be to show that in general the $1$-controlled wire swaps or conservative word swaps generate the alternating conservative \what{}. We do not prove this, but do show how the universality of the Fredkin gate follows from our results and a bit of computer search.

\begin{comment}
\note{$CP(1, P_2)$ (=1-controlled wire swaps) generate all conservative permutations of $A^3$. Do they also
generate all elements of $\Cfour\cap B_n(A)$ for all large $n$ ? For all $n\geq 4$ ?}
\end{comment}

\begin{comment}
In this section, by the $w$-controlled rotation we mean the rotation that, if control wires contain $w$, rotates the content of three cells.
Note that these are not controlled $P_4$-permutations: controlled $P_4$-permutations are controlled rotations in the stronger sense of rotating only three particular words.
\end{comment}

The following shows that the 00-word-controlled rotation and the $01$-word-controlled are generated by the 0-word-controlled rotation. These implementations were found by computer search, and both are optimal in the number of gates.

\begin{lemma}
\label{lem:FirstLemmaOfThisSection}
The $00$-word-controlled (resp. $01$-word-controlled) three-wire rotation can be implemented with nine (resp. eight) $0$-word-controlled three-wire rotations.% but can not be implemented with eight (resp. seven).
%The $01$-word-controlled three-wire rotation can be implemented with eight $0$-word-controlled three-wire rotations but can not be implemented with seven.
\end{lemma}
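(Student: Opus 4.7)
My plan is to exhibit explicit circuits found by computer search and verify them by direct inspection. Recall that a $0$-word-controlled three-wire rotation is a $4$-ary gate that applies a fixed $3$-cycle to three of its wires precisely when the fourth (control) wire holds the symbol $0$. When embedded in a $5$-wire circuit, such a gate is specified by its control wire, an ordered triple of rotated wires, and the direction of rotation; one of the five wires is left untouched. A candidate decomposition is therefore a finite sequence of such specifications, and whether it implements the desired $5$-ary target gate is a finite property, verifiable on all input configurations.

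For the $00$-word-controlled rotation I would present a sequence $g_1,\dots,g_9$ produced by search and check that $g_9\circ\cdots\circ g_1$ equals the target. The verification organises naturally by cases on the values $(c_1,c_2)$ of the two designated control wires at input time. When $(c_1,c_2)=(0,0)$, one traces the composite action on the three target wires through all nine gates and confirms that it equals the intended $3$-cycle. When $(c_1,c_2)\neq(0,0)$, at least one of the control wires holds a nonzero symbol, rendering dormant every gate whose control is that wire; one then verifies that the remaining firing gates --- including those gates whose control is a mutable target wire, which may activate depending on the current contents --- compose to the identity on the three target wires. The $01$-word-controlled case is handled analogously, with a sequence of eight gates.

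The main obstacle is not the correctness verification, which is mechanical once the sequences are exhibited, but the optimality claim. To show that nine and eight are minimal, one must exhaust all circuits of strictly smaller length built from $0$-word-controlled three-wire rotations and confirm that none realises the target gate. This is a finite but sizeable enumeration, and is precisely the computational content attributed to search in the statement of the lemma.
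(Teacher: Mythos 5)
Your plan is the same in spirit as the paper's proof --- the paper simply exhibits two explicit circuits found by computer search (one of nine, one of eight $0$-word-controlled rotations, given as a figure and, for the $00$-case, also in symbols as a composition $\rho_{i_1,j_1,k_1,l_1}(\controlled{0}{R})\circ\cdots$) and the correctness check is a finite verification. The trouble is that your proposal stops at the plan: you never actually produce the sequences $g_1,\dots,g_9$ and $g_1,\dots,g_8$. For a lemma of this type the explicit circuit \emph{is} the proof; asserting that a suitable sequence ``would be presented'' after a search leaves the existence claim unestablished, so as written there is nothing to verify and the statement is not proved.

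Two smaller points. First, your verification sketch for the case $(c_1,c_2)\neq(0,0)$ reasons that any gate controlled by a nonzero control wire stays dormant, but in decompositions of this kind (including the paper's actual one) the constituent gates also \emph{rotate} the two designated control wires, so their contents change during the computation and a gate controlled by wire $0$ or $1$ can fire later even if that wire was initially nonzero; moreover you must check identity on all five wires, not just on the three rotated ones, since the control wires have to be restored. A correct check must therefore trace the full state, not argue from the initial control values alone. Second, the optimality of nine and eight gates, which you treat as the main obstacle, is not part of the lemma --- it is only a remark in the surrounding text --- so the exhaustive enumeration you describe is not needed to prove the stated claim; what is needed, and missing, is the concrete pair of circuits.
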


\begin{proof}
%A computer search shows that eight and seven gates do not suffice. We show how to compose the $00$-word-controlled rotation out of nine $0$-word-controlled rotations.
See Figure~\ref{fig:ccrot} for the diagrams of these implementations. We give the implementation of the $00$-word-controlled rotation also in symbols: Let $A = \{0,1\}$ and $R \in B_3(A)$ be the rotation $R = \pi_{(1\,2\,3)}$. Write $\rho_{a,b,c,d}(f)$ for $f$ applied to cells $a,b,c,d$ in that order.
\begin{align*}
\controlled{00}{R} = \;
&\rho_{1,0,2,3}(\controlled{0}{R}) \circ
\rho_{3,1,4,2}(\controlled{0}{R}) \circ
\rho_{1,0,2,4}(\controlled{0}{R}) \circ \\
&\rho_{3,0,1,2}(\controlled{0}{R}) \circ
\rho_{0,1,3,4}(\controlled{0}{R}) \circ
\rho_{1,2,3,4}(\controlled{0}{R}) \circ \\
&\rho_{0,1,4,3}(\controlled{0}{R}) \circ
\rho_{1,0,2,3}(\controlled{0}{R}) \circ
\rho_{3,0,2,4}(\controlled{0}{R})
\end{align*}
\qed
\end{proof}

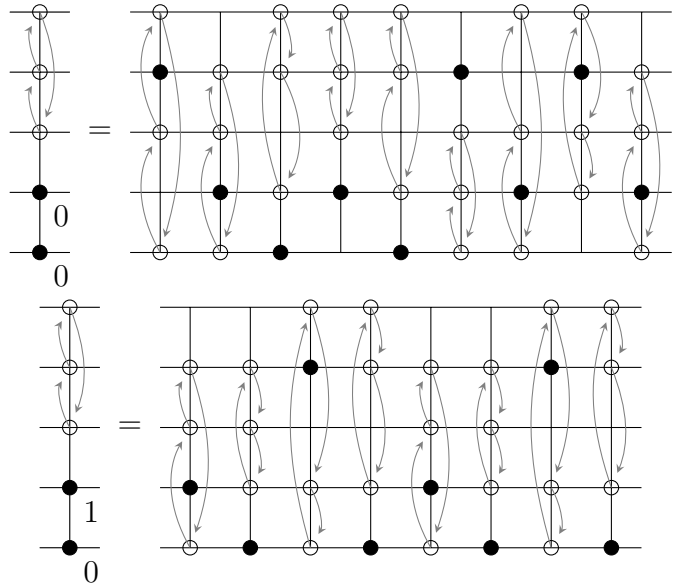
\begin{figure}[h]
\begin{center}
 \begin{tikzpicture}[scale = 0.8] % was 0.8

\draw[fill] (-2.5,0) grid (-1.5,4);
\draw[fill] (-2,0) circle (0.12);
\draw[fill] (-2,1) circle (0.12);
\draw[] (-2,2) circle (0.12);
\draw[] (-2,3) circle (0.12);
\draw[] (-2,4) circle (0.12);
\draw[-stealth,shorten >=6pt,color=gray] (-2,2) to[bend left=30] (-2,3);
\draw[-stealth,shorten >=6pt,color=gray] (-2,3) to[bend left=30] (-2,4);
\draw[-stealth,shorten >=6pt,color=gray] (-2,4) to[bend left=22] (-2,2);
\node[below right=0.05] () at (-2,0) {$0$}; \node[below right=0.05] () at (-2,1) {$0$};

\node () at (-1,2) {$=$};

\draw (-0.5,0) grid (8.5,4);

\draw[fill] (0,3) circle (0.12); \draw[fill] (1,1) circle (0.12); \draw[fill] (2,0) circle (0.12);
\draw[fill] (3,1) circle (0.12); \draw[fill] (4,0) circle (0.12); \draw[fill] (5,3) circle (0.12);
\draw[fill] (6,1) circle (0.12); \draw[fill] (7,3) circle (0.12); \draw[fill] (8,1) circle (0.12);

\draw[] (0,0) circle (0.12); \draw[] (1,0) circle (0.12); \draw[] (2,1) circle (0.12);
\draw[] (3,2) circle (0.12); \draw[] (4,1) circle (0.12); \draw[] (5,0) circle (0.12);
\draw[] (6,0) circle (0.12); \draw[] (7,1) circle (0.12); \draw[] (8,0) circle (0.12);

\draw[] (0,2) circle (0.12); \draw[] (1,2) circle (0.12); \draw[] (2,3) circle (0.12);
\draw[] (3,3) circle (0.12); \draw[] (4,3) circle (0.12); \draw[] (5,1) circle (0.12);
\draw[] (6,2) circle (0.12); \draw[] (7,2) circle (0.12); \draw[] (8,2) circle (0.12);

\draw[] (0,4) circle (0.12); \draw[] (1,3) circle (0.12); \draw[] (2,4) circle (0.12);
\draw[] (3,4) circle (0.12); \draw[] (4,4) circle (0.12); \draw[] (5,2) circle (0.12);
\draw[] (6,4) circle (0.12); \draw[] (7,4) circle (0.12); \draw[] (8,3) circle (0.12);

%\node () at (0,-1) {{\Large $\nearrow$}};
%\node () at (1,-1) {{\Large $\nearrow$}};
%\node () at (2,-1) {{\Large $\searrow$}};
%\node () at (3,-1) {{\Large $\nearrow$}};
%\node () at (4,-1) {{\Large $\nearrow$}};
%\node () at (5,-1) {{\Large $\nearrow$}};
%\node () at (6,-1) {{\Large $\nearrow$}};
%\node () at (7,-1) {{\Large $\searrow$}};
%\node () at (8,-1) {{\Large $\nearrow$}};

\draw[-stealth,shorten >=6pt,color=gray] (0,0) to[bend left=30] (0,2);
\draw[-stealth,shorten >=6pt,color=gray] (0,2) to[bend left=30] (0,4);
\draw[-stealth,shorten >=6pt,color=gray] (0,4) to[bend left=16] (0,0);

\draw[-stealth,shorten >=6pt,color=gray] (1,0) to[bend left=30] (1,2);
\draw[-stealth,shorten >=6pt,color=gray] (1,2) to[bend left=30] (1,3);
\draw[-stealth,shorten >=6pt,color=gray] (1,3) to[bend left=19] (1,0);

\draw[-stealth,shorten >=6pt,color=gray] (2,4) to[bend left=30] (2,3);
\draw[-stealth,shorten >=6pt,color=gray] (2,3) to[bend left=30] (2,1);
\draw[-stealth,shorten >=6pt,color=gray] (2,1) to[bend left=19] (2,4);

\draw[-stealth,shorten >=6pt,color=gray] (3,2) to[bend left=30] (3,3);
\draw[-stealth,shorten >=6pt,color=gray] (3,3) to[bend left=30] (3,4);
\draw[-stealth,shorten >=6pt,color=gray] (3,4) to[bend left=22] (3,2);

\draw[-stealth,shorten >=6pt,color=gray] (4,1) to[bend left=30] (4,3);
\draw[-stealth,shorten >=6pt,color=gray] (4,3) to[bend left=30] (4,4);
\draw[-stealth,shorten >=6pt,color=gray] (4,4) to[bend left=19] (4,1);

\draw[-stealth,shorten >=6pt,color=gray] (5,0) to[bend left=30] (5,1);
\draw[-stealth,shorten >=6pt,color=gray] (5,1) to[bend left=30] (5,2);
\draw[-stealth,shorten >=6pt,color=gray] (5,2) to[bend left=22] (5,0);

\draw[-stealth,shorten >=6pt,color=gray] (6,0) to[bend left=30] (6,2);
\draw[-stealth,shorten >=6pt,color=gray] (6,2) to[bend left=30] (6,4);
\draw[-stealth,shorten >=6pt,color=gray] (6,4) to[bend left=16] (6,0);

\draw[-stealth,shorten >=6pt,color=gray] (7,4) to[bend left=30] (7,2);
\draw[-stealth,shorten >=6pt,color=gray] (7,2) to[bend left=30] (7,1);
\draw[-stealth,shorten >=6pt,color=gray] (7,1) to[bend left=19] (7,4);

\draw[-stealth,shorten >=6pt,color=gray] (8,0) to[bend left=30] (8,2);
\draw[-stealth,shorten >=6pt,color=gray] (8,2) to[bend left=30] (8,3);
\draw[-stealth,shorten >=6pt,color=gray] (8,3) to[bend left=19] (8,0);
\end{tikzpicture}
%\vspace{9mm}
% \end{center}
%
%
%
%
%
% \begin{center}
\begin{tikzpicture}[scale = 0.8] % was 0.8

\draw[fill] (-2.5,0) grid (-1.5,4);
\draw[fill] (-2,0) circle (0.12);
\draw[fill] (-2,1) circle (0.12);
\draw[] (-2,2) circle (0.12);
\draw[] (-2,3) circle (0.12);
\draw[] (-2,4) circle (0.12);
\draw[-stealth,shorten >=6pt,color=gray] (-2,2) to[bend left=30] (-2,3);
\draw[-stealth,shorten >=6pt,color=gray] (-2,3) to[bend left=30] (-2,4);
\draw[-stealth,shorten >=6pt,color=gray] (-2,4) to[bend left=22] (-2,2);
\node[below right=0.05] () at (-2,0) {$0$}; \node[below right=0.05] () at (-2,1) {$1$};

\node () at (-1,2) {$=$};

\draw (-0.5,0) grid (7.5,4);

\draw[fill] (0,1) circle (0.12); \draw[fill] (1,0) circle (0.12); \draw[fill] (2,3) circle (0.12);
\draw[fill] (3,0) circle (0.12); \draw[fill] (4,1) circle (0.12); \draw[fill] (5,0) circle (0.12);
\draw[fill] (6,3) circle (0.12); \draw[fill] (7,0) circle (0.12);

\draw[] (0,0) circle (0.12); \draw[] (1,1) circle (0.12); \draw[] (2,0) circle (0.12);
\draw[] (3,1) circle (0.12); \draw[] (4,0) circle (0.12); \draw[] (5,1) circle (0.12);
\draw[] (6,0) circle (0.12); \draw[] (7,1) circle (0.12);

\draw[] (0,2) circle (0.12); \draw[] (1,3) circle (0.12); \draw[] (2,4) circle (0.12);
\draw[] (3,4) circle (0.12); \draw[] (4,2) circle (0.12); \draw[] (5,3) circle (0.12);
\draw[] (6,4) circle (0.12); \draw[] (7,4) circle (0.12);

\draw[] (0,3) circle (0.12); \draw[] (1,2) circle (0.12); \draw[] (2,1) circle (0.12);
\draw[] (3,3) circle (0.12); \draw[] (4,3) circle (0.12); \draw[] (5,2) circle (0.12);
\draw[] (6,1) circle (0.12); \draw[] (7,3) circle (0.12);

%\node () at (0,-1) {{\Large $\nearrow$}};
%\node () at (1,-1) {{\Large $\nearrow$}};
%\node () at (2,-1) {{\Large $\searrow$}};
%\node () at (3,-1) {{\Large $\nearrow$}};
%\node () at (4,-1) {{\Large $\nearrow$}};
%\node () at (5,-1) {{\Large $\nearrow$}};
%\node () at (6,-1) {{\Large $\nearrow$}};
%\node () at (7,-1) {{\Large $\searrow$}};
%\node () at (8,-1) {{\Large $\nearrow$}};

\draw[-stealth,shorten >=6pt,color=gray] (0,0) to[bend left=30] (0,2);
\draw[-stealth,shorten >=6pt,color=gray] (0,2) to[bend left=30] (0,3);
\draw[-stealth,shorten >=6pt,color=gray] (0,3) to[bend left=19] (0,0);

\draw[-stealth,shorten >=6pt,color=gray] (1,1) to[bend left=22] (1,3);
\draw[-stealth,shorten >=6pt,color=gray] (1,3) to[bend left=30] (1,2);
\draw[-stealth,shorten >=6pt,color=gray] (1,2) to[bend left=30] (1,1);

\draw[-stealth,shorten >=6pt,color=gray] (2,0) to[bend left=16] (2,4);
\draw[-stealth,shorten >=6pt,color=gray] (2,4) to[bend left=19] (2,1);
\draw[-stealth,shorten >=6pt,color=gray] (2,1) to[bend left=30] (2,0);

\draw[-stealth,shorten >=6pt,color=gray] (3,1) to[bend left=19] (3,4);
\draw[-stealth,shorten >=6pt,color=gray] (3,4) to[bend left=30] (3,3);
\draw[-stealth,shorten >=6pt,color=gray] (3,3) to[bend left=22] (3,1);

\draw[-stealth,shorten >=6pt,color=gray] (4,0) to[bend left=30] (4,2);
\draw[-stealth,shorten >=6pt,color=gray] (4,2) to[bend left=30] (4,3);
\draw[-stealth,shorten >=6pt,color=gray] (4,3) to[bend left=19] (4,0);

\draw[-stealth,shorten >=6pt,color=gray] (5,1) to[bend left=22] (5,3);
\draw[-stealth,shorten >=6pt,color=gray] (5,3) to[bend left=30] (5,2);
\draw[-stealth,shorten >=6pt,color=gray] (5,2) to[bend left=30] (5,1);

\draw[-stealth,shorten >=6pt,color=gray] (6,0) to[bend left=16] (6,4);
\draw[-stealth,shorten >=6pt,color=gray] (6,4) to[bend left=19] (6,1);
\draw[-stealth,shorten >=6pt,color=gray] (6,1) to[bend left=30] (6,0);

\draw[-stealth,shorten >=6pt,color=gray] (7,1) to[bend left=19] (7,4);
\draw[-stealth,shorten >=6pt,color=gray] (7,4) to[bend left=30] (7,3);
\draw[-stealth,shorten >=6pt,color=gray] (7,3) to[bend left=22] (7,1);

\end{tikzpicture}
\end{center}

\caption{$00$-controlled  and  $01$-controlled rotations built from $0$-controlled rotations. These are controlled by the two bottommost wires, the rotation rotates the wires in order $2 \rightarrow 3 \rightarrow 4 \rightarrow 2$, where the bottommost wire is the $0$th. The diagram is read from left to right, and in each column we perform a $0$-controlled rotation. A black circle indicates a control wire, and white circles are the rotated wires, and the arrows indicate the direction of rotation. %: $\nearrow$ means `rotate each wire one step upward'.
}
\label{fig:ccrot}
\end{figure}

A similar brute force search also shows the following (again six is optimal).

\begin{lemma}
\label{lem:SecondLemmaOfThisSection}
The word cycle $(0001 \; 0010 \; 0100)$ can be built from six $0$-word-controlled three-wire rotations. The same is true for $(0011 \; 0110 \; 0101)$.
\end{lemma}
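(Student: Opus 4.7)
The proof is essentially a certificate-verification argument: we exhibit an explicit six-gate composition realizing each target 3-cycle, and verification is routine by evaluating the resulting permutation on all sixteen elements of $\{0,1\}^4$.

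The plan is to perform an exhaustive breadth-first search in the Cayley graph of the subgroup of $B_4(\{0,1\}) \cong S_{16}$ generated by the $0$-word-controlled three-wire rotations. Such a rotation is specified by a choice of control wire (4 options) and a rotation direction on the remaining three wires (2 options), so the effective generating set has at most eight elements. At depth $d$ the product set has size at most $8^d$; for $d=6$ this is $8^6 = 262144$, well within the reach of a routine search. For each of the two target 3-cycles $(0001 \; 0010 \; 0100)$ and $(0011 \; 0110 \; 0101)$, the search returns a composition of six gates.

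Having found such compositions, I would present them in the symbolic format used in Lemma~\ref{lem:FirstLemmaOfThisSection}: with $R = \pi_{(1\,2\,3)}$ the three-wire rotation, each decomposition has the form
\[
\rho_{a_1,b_1,c_1,d_1}(\controlled{0}{R}) \circ \rho_{a_2,b_2,c_2,d_2}(\controlled{0}{R}) \circ \cdots \circ \rho_{a_6,b_6,c_6,d_6}(\controlled{0}{R}),
\]
where each ordered tuple $(a_i,b_i,c_i,d_i)$ assigns the control role to the first entry and specifies the cyclic order of the other three. Verification is then mechanical: apply the composition to each of the sixteen inputs in $\{0,1\}^4$ and confirm that the three designated words are cyclically permuted while all others are fixed.

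For the parenthetical optimality claim (six is minimal), I would extend the search to verify that no composition of at most five $0$-controlled rotations equals either target; since $\sum_{d=0}^{5} 8^d = 37449$, this is also trivially small. The main obstacle is not mathematical but practical: writing the search, and then transcribing a clean six-gate decomposition in a form readable in the paper. There is no conceptual insight required beyond the finiteness of $S_{16}$ and the closure of the search space under composition.
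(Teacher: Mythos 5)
Your proposal takes essentially the same route as the paper: the authors also establish this lemma purely by brute-force computer search over compositions of $0$-word-controlled three-wire rotations (exactly as for Lemma~\ref{lem:FirstLemmaOfThisSection}), including checking that no shorter composition works. Your plan to print explicit six-gate certificates would in fact be slightly more detailed than the paper, which only reports the outcome of the search without exhibiting the decompositions.
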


%\begin{proof}
% This can be proved by a short brute force search. \qed
%\end{proof}

%We do not understand the gates that are find in any more detail that the knowledge that they suffice.
%Note however that in both cases there are only $18$ distinct permutations possible as a middle step (after applying 3 generators), so these can not be far from canonical.

Let $\pi_1 = (001 \; 010 \; 100)$ and $\pi_2 = (011 \; 110 \; 101)$. Note that $\pi_1 \circ \pi_2$ is the three-wire rotation. Then, by Lemma~\ref{lem:FirstLemmaOfThisSection} and Lemma~\ref{lem:InductionLemma}, %we have that
$1$-control $(\pi_1 \circ \pi_2)$-permutations generate $k$-controlled $(\pi_1 \circ \pi_2)$-permutations for all $k$. By %the second lemma of this section,
Lemma~\ref{lem:SecondLemmaOfThisSection}, $1$-controlled $(\pi_1 \circ \pi_2)$-permutations generate $1$-controlled $\{\pi_1, \pi_2\}$-permutations, and then by Lemma~\ref{lem:ExtraWireLemma}, $k$-controlled $(\pi_1 \circ \pi_2)$-permutations generate $k$-controlled $\{\pi_1, \pi_2\}$-permutations. % for all $k$.
Putting these together and combining with Corollary~\ref{cor:Something}, we have:

\begin{theorem}
Let $A = \{0,1\}$. Then the alternating conservative \what{} $\Cfour$ is generated by the controlled wire rotation
\[ f(a,b,c,d) = \left\{\begin{array}{cc}
(a,c,d,b) & \mbox{if } a = 0 \\
(a,b,c,d) & \mbox{otherwise}
\end{array}\right. \]
and the even conservative permutations of $A^3$.
\end{theorem}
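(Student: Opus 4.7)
The plan is to apply Corollary~\ref{cor:Something}: $P_4 \subseteq B_3(A)$ is $n$-control-universal for $\Cfour$ for every $n \geq 3$. For $A = \{0,1\}$ the set $P_4$ reduces to $\{\pi_1^{\pm 1}, \pi_2^{\pm 1}\}$, since the two nontrivial weight classes of $A^3$ each have exactly three elements; moreover $\Cfour \cap B_\ell(A) = \{\id_\ell\}$ for $\ell \leq 2$ because each weight class of $A^\ell$ has at most two elements for such $\ell$. Writing $G = \{f\} \cup (\Cfour \cap B_3(A))$ and $R = \pi_1 \circ \pi_2 = \pi_{(1\;3\;2)}$ (the three-wire rotation, hence an element of $G$), it therefore suffices to show $CP(k, \{\pi_1, \pi_2\}) \subseteq \clone{G}$ for every $k \geq 0$.

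First I would build $CP(k, \{R\}) \subseteq \clone{G}$ for all $k \geq 1$. The identity $\controlled{1}{R} = (\id_1 \oplus R) \circ f^{-1}$, verified immediately by cases on the control bit, places all of $CP(1, \{R\})$ in $\clone{G}$, using that $\id_1 \oplus R$ is an extension of the generator $R$. Lemma~\ref{lem:FirstLemmaOfThisSection} next supplies $\controlled{00}{R}$ and $\controlled{01}{R}$ in $\clone{G}$; swapping the two control wires of $\controlled{01}{R}$ (a rewiring) gives $\controlled{10}{R}$; and for $\controlled{11}{R}$ I would compose $\id_2 \oplus R$ with the three inverses $\controlled{00}{R}^{-1}$, $\controlled{01}{R}^{-1}$, $\controlled{10}{R}^{-1}$. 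On any control word other than $(1,1)$ exactly one of those inverses is active and is cancelled by the subsequent unconditional $R$, while on $(1,1)$ all three inverses are inert and $R$ is applied. Thus $CP(2, \{R\}) \subseteq \clone{CP(1, \{R\})}$, after which the Induction Lemma (Lemma~\ref{lem:InductionLemma} with $k=1$) propagates to $CP(k, \{R\}) \subseteq \clone{G}$ for every $k \geq 1$.

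To pass from $R$-controls to $\{\pi_1, \pi_2\}$-controls I would invoke Lemma~\ref{lem:SecondLemmaOfThisSection}, which expresses $\controlled{0}{\pi_1}$ and $\controlled{0}{\pi_2}$ as compositions of six $\controlled{0}{R}$-gates each. The same bridging identity $\controlled{1}{\pi_i} = (\id_1 \oplus \pi_i) \circ \controlled{0}{\pi_i}^{-1}$ (valid since $\pi_i \in G$) puts $CP(1, \{\pi_1, \pi_2\})$ into $\clone{G}$. Then Lemma~\ref{lem:ExtraWireLemma}, applied with $h=k=1$, $Q = \{\pi_1, \pi_2\}$, $P = \{R\}$, together with the result of the previous paragraph, gives $CP(k+1, \{\pi_1, \pi_2\}) \subseteq \clone{CP(k+1, \{R\})} \subseteq \clone{G}$ for every $k \geq 0$. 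The $k=0$ case is immediate from $\pi_1, \pi_2 \in G$, and Corollary~\ref{cor:Something} then finishes the proof.

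The delicate step is the construction of $\controlled{11}{R}$, since Lemma~\ref{lem:FirstLemmaOfThisSection} is asymmetric in the control value and does not supply it directly. The key observation is that $R$ itself is an \emph{even} conservative permutation of $A^3$ and therefore available as an unconditional generator; combining the extension $\id_2 \oplus R$ with the three already-constructed $2$-controlled gates on complementary control words fills the remaining case by a simple cancellation. Everything else in the plan is a mechanical composition of the previously proved lemmas and corollaries.
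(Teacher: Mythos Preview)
Your overall architecture matches the paper's: establish $CP(k,\{R\})\subseteq\clone{G}$ via Lemma~\ref{lem:FirstLemmaOfThisSection} and the Induction Lemma, then pass to $CP(k,\{\pi_1,\pi_2\})$ via Lemma~\ref{lem:SecondLemmaOfThisSection} and Lemma~\ref{lem:ExtraWireLemma}, and finish with Corollary~\ref{cor:Something}. Your explicit handling of $\controlled{11}{R}$ is a nice detail that the paper leaves implicit.

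There is, however, one genuine gap. To invoke Lemma~\ref{lem:ExtraWireLemma} with $h=k=1$, $Q=\{\pi_1,\pi_2\}$, $P=\{R\}$ you need the hypothesis $CP(1,\{\pi_1,\pi_2\})\subseteq\clone{CP(1,\{R\})}$, not merely $\subseteq\clone{G}$. Your bridging identity $\controlled{1}{\pi_i}=(\id_1\oplus\pi_i)\circ\controlled{0}{\pi_i}^{-1}$ uses $\id_1\oplus\pi_i$, which lies in $\clone{G}$ because $\pi_i\in G$, but you have not shown it lies in $\clone{CP(1,\{R\})}$; since $\clone{CP(1,\{R\})}\cap B_4(A)=\langle CP(1,\{R\})\rangle$ and $\pi_i\notin CP(1,\{R\})$, this is not automatic. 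Consequently the chain $CP(m+1,\{\pi_1,\pi_2\})\subseteq\clone{CP(m+1,\{R\})}$ is unjustified as written.

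The easiest repair is the bit-flip symmetry that the paper uses implicitly. Because $R$ is a \emph{wire} permutation it commutes with the global symbol swap $\sigma:x\mapsto\bar x$, so conjugating any circuit of $\controlled{0}{R}$-gates by $\sigma$ yields a circuit of $\controlled{1}{R}$-gates. Applied to Lemma~\ref{lem:SecondLemmaOfThisSection} this produces $\controlled{1}{\pi_2^{-1}}$ and $\controlled{1}{\pi_1^{-1}}$ from $\controlled{1}{R}$'s, hence $CP(1,\{\pi_1,\pi_2\})\subseteq\clone{CP(1,\{R\})}$ as required. (The same symmetry, applied to Lemma~\ref{lem:FirstLemmaOfThisSection}, also gives $\controlled{11}{R}$ and $\controlled{10}{R}$ directly, so your separate construction of $\controlled{11}{R}$, while correct, becomes unnecessary.)
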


Clearly $f(a,b,c,d)$ is generated by $1$-controlled wire swaps. It follows that the Fredkin gate together with the (unconditional) wire swap generates all even conservative permutations of $\{0,1\}^n$ for $n \geq 4$.

% test: 1 1 0 1 0

%1 0 2 3
%0 3 2 1
%3 4 1 0

%ovvvovvv

%(0, 1, 4, 3, 0)
%(3, 0, 4, 1, 0)
%(0, 1, 3, 2, 0)
%(1, 0, 2, 3, 0)
%(0, 1, 4, 3, 0)
%(3, 0, 4, 1, 0)
%(0, 1, 3, 2, 0)
%(1, 0, 2, 3, 0)

\section{Conclusion}

% TODO: Reference Toffoli for non-f.g.'ness in the Introduction -- I don't know where he states this.

%We have strengthened Toffoli's result on the nonuniversality of a finite set of generators.
We have precisely determined the \what{} generated by a finite set of generators on an even order alphabet and show that on an odd alphabet, a finite collection of mappings generates the whole \what{}. The first result confirms a conjecture in \cite{boykett15} and the second gives a simpler proof of the same result from that paper. Moreover, we have shown that the alternating conservative \what{} is finitely generated on all alphabets, but the conservative \what{} is never finitely generated.

The methods are rather general: We have developed an induction result (Lemma~\ref{lem:InductionLemma}) for finding generating sets for \what{}s of controlled permutations, allowing us to determine finite generating sets for some \what{}s with uniform methods. We also prove the nonexistence of a finite generating family for conserved gates with a general method in Theorem~\ref{thm:ConservedQuantityNonFG}, when borrowed bits are not used. We only need particular properties of the weight function in the proof of Theorem~\ref{thm:XuGeneralization}, where it is shown that the (usual) conservative \what{} is not finitely generated even when borrowed bits are allowed.
%We believe these techniques extend to more general classes of \what{}s defined by a conserved quantity. %are the direct product of a type of permutation group on equibalance classes.

%In \cite{aaronsonetal15} the full list of reversible gate families in the binary case is listed, when the use of auxiliary bits is allowed. This includes the conservative \what{}, various modular \what{}s and nonaffine \what{}s. As we do not allow the use of auxiliary bits, we are not limited to these \what{}s; still, it is an interesting question which of them are finitely generated in our strict sense. %As we do not allow the use of auxiliary bits, there are many more \what{}s, and thus general methods are needed to tackle their finitely generatedness problems. % list them maybe? the nonaffine revclones, what else?
%We note that %, without the use of auxilliary bits,
%We note that is an infinite ascending chain of sub\what{}s of the conservative \what{}, mirroring the infinite descending chains seen in \cite{aaronsonetal15}. <-- Isn't this already shown by Xu, in the binary case (where aaronsonetal is)? So maybe due to the fac that we haven't actually defined chains, this is too obscure at this point?
%This has been undertaken over alphabets of arbitrary finite size.

%An obvious next step would be to look at more general conserved quantities.

While this paper develops strong techniques for showing finitely generatedness and non-finitely generatedness of \what{}s, our generating sets are also somewhat abstract,  not corresponding very well to known generating sets. It would be of value to replace the constructions found in section \ref{secsearches} by more understandable constructions, in order to find more concrete generating sets in the case of general alphabets for conservative gates. % We also note that in the case of odd alphabets, we need all permutations of $A^4$ as the generating set, while it is known that those of $A^2$ suffice \cite{boykett15,selinger16}. It seems plausible that more generally even permutations of $A^2$ generate the whole alternating revital when $|A| > 2$. %gain a better comprehension of the properties and any important distinctions between the possible solutions.

\bibliographystyle{elsarticle-num}
\bibliography{revcompproc}

\end{document}